\theoremstyle{thmstyleone}%
\newtheorem{theorem}{Theorem}
\newtheorem{proposition}[theorem]{Proposition}%
\newtheorem{lemma}[theorem]{Lemma}
\theoremstyle{thmstyletwo}%
\theoremstyle{remark}
\newtheorem*{remark*}{Remark}%
\theoremstyle{thmstylethree}%
\newtheorem{definition}[theorem]{Definition}%
\DeclareMathOperator{\Tr}{Tr}
\newcommand{\abs}[1]{\left\vert#1\right\vert}
\newcommand{\defeq}{\vcentcolon=}
\crefname{enumi}{condition}{conditions}
\crefname{proposition}{Proposition}{Proposition}
\crefname{lemma}{Lemma}{Lemmas}
\crefname{definition}{Definition}{Definitions}
\begin{document}

\title[Equivalence of MUBs via orbits]{Equivalence of mutually unbiased bases via orbits: general theory and a $d=4$ case study}


\author*[1]{\fnm{Amit} \sur{Te'eni}}\email{amit.teeni@biu.ac.il}

\author[1]{\fnm{Eliahu} \sur{Cohen}}

\affil*[1]{\orgdiv{Faculty of Engineering and Institute of Nanotechnology and Advanced Materials}, \orgname{Bar Ilan University}, \orgaddress{\city{Ramat Gan}, \postcode{5290002}, \country{Israel}}}


\abstract{In quantum mechanics, mutually unbiased bases (MUBs) represent orthonormal bases that are as ``far apart'' as possible, and their classification reveals rich underlying geometric structure.
Given a complex inner product space, we construct the space of its orthonormal bases as a discrete quotient of the complete flag manifold. We introduce a metric on this space, which corresponds to the ``MUBness'' distance. This allows us to describe equivalence between sets of mutually unbiased bases in terms of the geometry of this space. The subspace of bases that are unbiased with respect to the standard basis decomposes into orbits under a certain group action, and this decomposition corresponds to the classification of complex Hadamard matrices. More generally, we consider a list of $k$ MUBs, that one wishes to extend. The candidates are points in the subspace comprising all bases which are unbiased with respect to the entire list. This space also decomposes into orbits under a group action, and we prove that points in distinct orbits yield inequivalent MUB lists. Thus, we generalize the relation between complex Hadamard matrices and MUBs.
As an application, we identify new symmetries that reduce the parameter space of MUB triples in dimension $4$ by a factor of $4$.
}

\keywords{Mutually unbiased bases, Complex Hadamard matrices, Flag manifolds}



\maketitle

\section{Introduction}\label{intro}
Consider the inner product (Hilbert) space $ \mathcal{H} = \mathbb{C}^n $, and let $ p = \left\{ \ket{\phi_i} \right\}_{i=1}^n $ and $ q = \left\{ \ket{\psi_j} \right\}_{j=1}^n $ be two orthonormal bases (ONBs). If the inner products $ \braket{ \phi_i \vert \psi_j } $ all have the same modulus, then $p$ and $q$ are said to be \textit{mutually unbiased}. In this case the modulus must be $ 1 / \sqrt{n} $, i.e.:
\begin{equation}\label{MUB_def}
	\forall i, j, \quad \abs{ \braket{ \phi_i \vert \psi_j } }^2 = \frac{1}{n} .
\end{equation}
The term ``mutually unbiased bases'' (MUBs) originates in quantum information theory. Suppose $ \mathcal{H} $ is the Hilbert space of some quantum system, and consider two consecutive projective measurements of the state: first with respect to the basis $p$, and then with respect to $q$. In this scenario, the outcome of the first measurement provides no information regarding the outcome of the second one; the probability distribution for the second measurement is always uniform. In this sense, the two measurements are mutually unbiased.
Similarly, a set of orthonormal bases $ \left\{ q_k \right\}_{k=1}^m $ is referred to as a set of MUBs, if any two distinct bases $ q_k, q_l $ are mutually unbiased. It can be shown that no such set can contain more than $n+1$ bases~\cite{wootters1989optimal}. A set of exactly $n+1$ MUBs is called \textit{complete}.
Mutually unbiased bases have numerous applications in quantum information theory. A complete set of MUBs yields an optimal scheme for quantum state determination~\cite{ivonovic1981geometrical,wootters1989optimal,PhysRevLett.105.030406}.
MUBs are also used in quantum key distribution (QKD)~\cite{PhysRevLett.67.661,PhysRevLett.81.3018,PhysRevLett.88.127902,PhysRevLett.92.057901,RevModPhys.81.1301,pirandola2020advances,ashkenazy2024photon}, random access codes~\cite{PhysRevA.78.022310,PhysRevLett.114.170502}, quantum tomography~\cite{PhysRevA.85.052120,Lima:11}, entropic uncertainty relations~\cite{PhysRevA.91.042133,PhysRevA.104.062204,rastegin2013uncertainty,PhysRevA.79.022104}, and entanglement detection~\cite{PhysRevA.86.022311,tavakoli2021mutually,hiesmayr2021detecting}.

Due to the immense fundamental and practical significance of MUBs, it is important to understand their properties. One puzzle that has attracted the attention of many researchers, pertains to the largest possible set of MUBs.
If $n$ is a prime power, then a complete set of MUBs always exists; but in composite dimensions, it is generally unknown whether a complete set exists or not~\cite{brierley2009constructing,weiner2013gap,horodecki2022five,mcnulty2024mutually}. Although the problem is formulated in simple terms of basic linear algebra, it turns out to be a fairly deep one, related to a myriad of fields in physics and mathematics alike~\cite{planat2006survey,bengtsson2007three,bengtsson2007mutually,Boykin2007Mutually,brierley2009mutually,brierley2009all,godsil2009equiangular,durt2010mutually}.
A separate (yet related) problem is the classification of MUB sets up to equivalence. Two ordered lists $ \left( p_1, \ldots, p_k \right) $ and $ \left( q_1, \ldots, q_k \right) $ of MUBs are \textit{unitarily equivalent} is there exists a unitary $U$ such that $ U $ maps $p_i$ to $q_i$ for all $1 \leq i \leq k$. Inequivalent MUB sets can have distinct formal properties, and may also yield different performance in the aforementioned applications (see Section 3.13 of~\cite{mcnulty2024mutually}). Note that equivalence of MUB triples was recently studied theoretically~\cite{matolcsi2025triplets} and experimentally~\cite{PhysRevLett.132.080202}.

In this paper, we study the classification of MUB sets via group actions on manifolds.
Consider $ \mathcal{M}_n $, the manifold of unordered ``projective'' orthonormal bases of $ \mathbb{C}^n $ (where ``projective'' means that we do not distinguish between a basis vector and its scalar multiples). This manifold can be constructed as a quotient of the complete flag manifold with respect to the symmetric group, which acts by permuting basis vectors. $ \mathcal{M}_n $ can be endowed with the ``MUBness'' distance, thus making it into a metric space; and MUBs correspond to maximally-distant points. For a set of points $q_1, \ldots, q_k \in \mathcal{M}_n $, let $ \mathcal{N} \left( q_1, \ldots, q_k \right) \subseteq \mathcal{M}_n $ denote the subset of points which are unbiased with respect to the $k $ points $ q_1, \ldots, q_k $.
Using this language, we frame a general heuristic procedure for constructing a list of MUBs. This is done iteratively: we first choose an arbitrary $ q_1 \in \mathcal{M}_n $. Then, in the $k$th iteration we choose an arbitrary point $q_k \in \mathcal{N} \left( q_1, \ldots, q_{k-1} \right) $. Any set of MUBs can be constructed in this manner, but the choices in each step are highly redundant. 

Our main contribution is alleviating this redundancy issue. The unitary group $ \mathrm{U} \left( n \right) $ acts on $ \mathcal{M}_n $, since the latter is a quotient of $ \mathrm{U} \left( n \right) $. We show that $ \mathcal{N} \left( q_1, \ldots, q_{k-1} \right) $ is closed under the action of the subgroup $ \mathrm{U} \left( n \right)_{q_1, \ldots, q_{k-1}} \subseteq \mathrm{U} \left( n \right) $ that stabilizes $ q_1, \ldots, q_{k-1} $. We then prove the following (\Cref{thm:main}): two candidates $p, q \in \mathcal{N} \left( q_1, \ldots, q_{k-1} \right) $ for $q_k$ yield equivalent MUB lists, if and only if they belong to the same $ \mathrm{U} \left( n \right)_{q_1, \ldots, q_{k-1}} $-orbit.
For example, consider the submanifold $ \mathcal{N} \left( e \right) $ comprising bases that are unbiased with respect to the standard basis $e$. In this case, our partition of $ \mathcal{N} \left( e \right) $ is essentially equivalent to the standard classification of complex Hadamard matrices.
The partition of $ \mathcal{N} \left( q_1, \ldots, q_k \right) $ is a vast generalization of this classification. We then utilize our results to derive hitherto unknown equivalences between MUB triples in dimension $n=4$.

This paper is organized as follows. \Cref{sec:geometry} outlines our geometric constructions. First we construct $ \mathcal{M}_n $, the space of orthonormal bases for $\mathbb{C}^n$, as a discrete quotient of the complete flag manifold. Next, we introduce the MUBness metric (distance) on $ \mathcal{M}_n $, and show that the unitary group acts on $ \mathcal{M}_n $ via isometries. We then use the MUBness metric to define the subsets $ \mathcal{V} \left( q \right) \subseteq \mathcal{M}_n $ of $q$-unbiased bases; $\mathcal{N} \left( q_1, \ldots, q_k \right)$ is later defined as the intersection $ \bigcap_{i=1}^k \mathcal{V} \left( q_i \right) $. \Cref{sec:procedure} then relates MUB lists to theses geometric structures. We describe a standard procedure for constructing lists of MUBs, using our language. Later we define equivalence of MUB lists, and then we formulate and prove our main result, \Cref{thm:main}. For any step in the aforementioned procedure, this theorem characterizes choices that yield equivalent MUB steps, in terms of group orbits in $ \mathcal{N} \left( q_1, \ldots, q_k \right) $. We then prove that these orbits are discrete. \Cref{sec:app} demonstrates our results in dimension $n=4$. This section serves two purposes: to illustrate concretely the abstract framework of the preceding sections; and to derive new results regarding equivalence of MUB triples, thus showcasing the power of our methods.
We conclude this paper by summarizing our main insights and proposing directions for future research.

\section{The space of orthonormal bases}\label{sec:geometry}
This section details the geometric constructions that shall be used later. Let us briefly outline the upshot of these constructions. First, there exists a set $ \mathcal{M}_n $, whose points correspond to orthonormal bases of $ \mathbb{C}^n $; and $ \mathcal{M}_n $ is endowed with a transitive $ \mathrm{U} \left( n \right) $-action. For each $ q \in \mathcal{M}_n $, there exists a subset $ \mathcal{V} \left( q \right) \subseteq \mathcal{M}_n $ comprising all bases which are unbiased with respect to $q$. $ \mathcal{V} \left( q \right) $ satisfies two key properties with respect to the $ \mathrm{U} \left( n \right) $-action:
\begin{itemize}
	\item For any $ U \in \mathrm{U} \left( n \right) , \; \mathcal{V} \left( U \cdot q \right) = U \cdot \mathcal{V} \left( q \right) $;
	\item $\mathcal{V} \left( q \right)$ is closed under the action of the stabilizer subgroup $ \mathrm{U} \left( n \right)_q $ of $q$.
\end{itemize}
Here $ \cdot $ denotes the $ \mathrm{U} \left( n \right) $-action, and $ \mathrm{U} \left( n \right)_q \defeq \left\{ U \in \mathrm{U} \left( n \right) \mid U \cdot q = q \right\} $. The remainder of this section may be skipped by readers content to accept these results without proof.

\subsection{Constructing $ \mathcal{M}_n $}\label{sec:M_n}
Recall that $\mathcal{M}_n$ denotes the set of all orthonormal bases (ONBs) of $ \mathbb{C}^n $. Here we construct $\mathcal{M}_n$ as a quotient of the unitary group $\mathrm{U}(n)$. $\mathcal{M}_n$ thus obtains the structure of a smooth manifold.

Before proceeding, we first refine our definitions. Recall that global phases are insignificant in quantum mechanics: the vectors $ \ket{\psi} $ and $ e^{i \theta} \ket{\psi} $ correspond to the same physical state. If a quantum system is described by the Hilbert space $\mathcal{H} $, then its possible states are the \textit{rays} in $\mathcal{H} $, i.e. the elements of the projective space $ \mathbb{P} \left( \mathcal{H} \right) $. Moreover, from a mathematical perspective, the definition \eqref{MUB_def} is invariant under multiplication of any basis vector by a phase. Hence, hereon we refer to basis elements ``in the projective sense''; i.e., we think of basis elements as rank-one projections (equivalently: points in $\mathbb{CP}^{n-1}$) rather than actual vectors. We say that a set $ p = \left\{ P_i \right\}_{i=1}^n $ of pairwise-orthogonal rank-one projections is \textit{complete} if $ \sum_{i=1}^n P_i = I $. Hereon, ``orthonormal basis'' is taken to mean ``complete set of pairwise-orthogonal rank-one projections''.
Two such complete sets $ p $ and $ q = \left\{ Q_j \right\}_{j=1}^n $ are said to be unbiased if:
\begin{equation}
	\forall i,j, \quad \Tr \left( P_i Q_j \right) = \frac{1}{n} .
\end{equation}

Any unitary matrix $ U \in \mathrm{U}(n) $ defines an orthonormal basis by taking its columns as the basis vectors. Thus we obtain a map $ \pi : \mathrm{U}(n) \twoheadrightarrow \mathcal{M}_n $. Since any basis can be obtained from some unitary matrix, this map is surjective; however it is not injective. In fact, there are two obstructions to injectivity:
\begin{itemize}
	\item We do not care about multiplication of basis vectors by phases;
	\item We do not care about the order of the basis vectors.
\end{itemize}
These obstructions are symmetries of the fibers $ \pi^{-1} \left( q \right) $; in other words, there are two group actions on $\mathrm{U}(n) $ that preserve the fibers. 
The first action multiplies each column by some phase $ \mathrm{U}(1) $. 
Clearly, this action is given by right multiplication with an element of the torus subgroup $ \mathrm{T}^n $ (the diagonal unitaries). Thus, the quotient of $ \mathrm{U}(n) $ by the first action is precisely the \textit{complete flag manifold}:
\begin{equation}
	\tilde{\mathcal{M}}_n \defeq \mathrm{Flag} \left( 1, 2, \ldots, n \right) = \mathrm{SL} \left( n; \mathbb{C} \right) / \mathrm{B}^n = \mathrm{SU} \left( n \right) / \left( \mathrm{T}^n \cap \mathrm{SU} \left( n \right) \right) = \mathrm{U} \left( n \right) / \mathrm{T}^n ,
\end{equation}
where $ \mathrm{B}^n $ is a Borel subgroup of $ \mathrm{SL} \left( n; \mathbb{C} \right) $.

Note we have only resolved the first obstruction, since flag manifolds are sensitive to the ordering of basis vectors. Orthonormal bases correspond to complete flags via:
\begin{equation}
	\left\{ \mathbf{b}_i \right\}_{i=1}^n\longmapsto (V_1,\dots,V_n)
	, \text{ where } V_i \defeq \mathrm{span}_{\mathbb{C}} \left\{ \mathbf{b}_1, \ldots, \mathbf{b}_i \right\} ,
\end{equation}
so any reordering of the basis elements $ \left\{ \mathbf{b}_i \right\}_{i=1}^n $ results in a different flag.
To eliminate this redundancy we take a second quotient, this time by the action of the symmetric group $ \mathrm{S}_n $, that acts by permuting the order of basis vectors. Let us consider $ \mathrm{S}_n $ as a subgroup of $ \mathrm{U} \left( n \right) $, embedded as the permutation matrices; then reordering of basis elements corresponds to right multiplication by a permutation matrix.

Of course, we can also construct our space with a single quotient rather than take two consecutive ones. 
Define a third subgroup of $ \mathrm{U} \left( n \right) $ -- the semidirect product $ \mathrm{C}_n \defeq \mathrm{S}_n \ltimes \mathrm{T}^n $. This subgroup consists of complex matrices with exactly one nonzero entry in every row and every column, and the nonzero entries all have modulus $1$. Notably, we have $ \mathrm{C}_n = N \left( \mathrm{T}^n \right) $, i.e. this group is the \textit{normalizer subgroup} of the torus in $ \mathrm{U} \left( n \right) $; and the quotient $ N \left( \mathrm{T}^n \right) / \mathrm{T}^n \cong \mathrm{S}_n $ is the \textit{Weyl group} of $ \mathrm{U} \left( n \right) $ (and also of $ \mathrm{SU} \left( n \right) $).
$\mathrm{C}_n$ is a closed subgroup of $ \mathrm{U} \left( n \right) $ (closedness follows because $\mathrm{T}^n$ is a torus and $\mathrm{S}_n$ is finite), and it has $n!$ connected components.
We define $\mathcal{M}_n$ as the quotient of $ \mathrm{U} \left( n \right) $ by the right action of $\mathrm{C}_n$:
\begin{equation}
	\mathcal{M}_n \defeq \mathrm{U} \left( n \right) / \mathrm{C}_n .
\end{equation}
This space differs from $ \tilde{\mathcal{M}}_n $ only by the quotient of a discrete group, so they both have the same dimension and share many other important features. Crucially, both are compact connected homogeneous $ \mathrm{U} \left( n \right) $-spaces. Moreover, $ \tilde{\mathcal{M}}_n $ is an $n!$-fold covering space of $ \mathcal{M}_n $; the covering map is the natural projection $ \tilde{\mathcal{M}}_n \twoheadrightarrow \mathcal{M}_n $ that maps an ordered orthonormal basis to an unordered one. In fact, since the flag manifold $ \tilde{\mathcal{M}}_n $ is simply connected it is the universal cover of $ \mathcal{M}_n $, hence $ \pi_1 \left( \mathcal{M}_n \right) \cong \mathrm{S}_n $.
The space $ \mathcal{M}_n $ is the \textit{permutation invariant complete flag manifold}, or the \textit{symmetrized complete flag manifold}, for $ \mathbb{C}^n $. It was defined in the same way in \cite{tirkkonen2017grassmannian}.
We note that the symmetrized space was less studied in the literature and is often not as convenient to work with as $ \tilde{\mathcal{M}}_n $.

Hereon, the points of $ \mathcal{M}_n $ are denoted either by small Latin letters (usually $p$ or $q$, with or without a subscript), or in the form $ U \mathrm{C}_n $ (the coset of $ \mathrm{C}_n $ represented by $ U \in \mathrm{U} \left( n \right) $).

\subsection{MUBness}\label{sec:MUBness}
In \Cref{sec:procedure}, we use $\mathcal{M}_n$ to define a procedure that yields mutually unbiased bases. To do so, we need to enrich $\mathcal{M}_n$ with additional geometric structure that captures the notion of mutual-unbiasedness, or \textit{MUBness}.
Given any two ONBs $p = \left\{ P_i \right\}_{i=1}^n $ and $q= \left\{ Q_j \right\}_{j=1}^n$, their MUBness is defined as~\cite{bengtsson2007three}:
\begin{equation}\label{MUBness}
	D \left( p, q \right) \defeq \sqrt{ n-1- \sum_i \sum_j \left( \Tr \left( P_i Q_j \right) -\frac{1}{n} \right)^2} .
\end{equation}
Evidently, $ 0 \leq D \left( p, q \right) \leq \sqrt{ n-1} $; and $ D \left( p, q \right) = \sqrt{ n-1} $ iff $p,q$ are mutually unbiased. Therefore, $D \left( p, q \right)$ indeed captures the extent of ``MUBness'' between $p$ and $q$. Here we recast the original derivation of \eqref{MUBness} in our own terms; namely, we construct a smooth embedding $\phi$ of $ \mathcal{M}_n $ into a certain Grassmannian manifold.
The pullback of the chordal metric (distance function) on the Grassmannian by $\phi$ then defines the MUBness metric on $ \mathcal{M}_n $. In fact, we shall see that $\phi$ embeds $\mathcal{M}_n$ as a single orbit of a $\mathrm{U} \left( n \right)$-action on the Grassmannian. Since $\mathrm{U} \left( n \right)$ acts on the Grassmannian by isometries, we get that the $\mathrm{U} \left( n \right)$ action on $ \mathcal{M}_n $ respects the MUBness metric.

First, recall the definition of $\mathcal{M}_n$ as the coset space $ \mathrm{U} \left( n \right) / \mathrm{C}_n $, and denote its points as cosets $ V \mathrm{C}_n $ for $ V \in \mathrm{U} \left( n \right) $. Then $ \mathrm{U} \left( n \right) $ acts on $\mathcal{M}_n$ transitively by left multiplication:
\begin{equation}\label{Un_action}
	\forall U \in \mathrm{U} \left( n \right) , \quad U \cdot V \mathrm{C}_n \defeq \left( UV \right) \mathrm{C}_n .
\end{equation}
Each $ V \mathrm{C}_n \in \mathcal{M}_n $ can be represented as the complete set of orthogonal projections $ p = \left\{ P_i \right\}_{i=1}^n $, where $ P_i = \ket{v_i} \bra{v_i} $ for $ \ket{v_i} $ the $i$th column of $V$.

Second, we define the Grassmannian and its $ \mathrm{U} \left( n \right) $ action. Using the physicists' convention, $ \mathfrak{su} \left( n \right) $ is the Lie algebra of traceless Hermitian matrices. We consider $ \mathfrak{su} \left( n \right) $ as a real vector space with the Frobenius inner product (which equals the Killing form up to a scalar multiple). Let $ \mathrm{Gr} \left( n-1, \mathfrak{su} \left( n \right) \right) $ be the Grassmannian of $ \left( n-1 \right) $-dimensional subspaces of $ \mathfrak{su} \left( n \right) $. This Grassmannian is naturally endowed with an $ \mathrm{SO} \left( \mathfrak{su} \left( n \right) \right) $-action: for $O \in \mathrm{SO} \left( \mathfrak{su} \left( n \right) \right) $ and $W \in \mathrm{Gr} \left( n-1, \mathfrak{su} \left( n \right) \right) $, $ O \cdot W $ is the image $ O \left( W \right) $. 
Now, $ \mathrm{U} \left( n \right) $ acts on $ \mathfrak{su} \left( n \right) $ via $ \mathrm{Ad}_U \left( A \right) \defeq U A U^\dagger $. This action preserves the Frobenius inner product, i.e. defines an element of $ \mathrm{O} \left( \mathfrak{su} \left( n \right) \right) $; and since $ \mathrm{U} \left( n \right) $ is connected, its image must land in the connected component of the identity, i.e. $ \mathrm{SO} \left( \mathfrak{su} \left( n \right) \right) $. Thus, we get a $ \mathrm{U} \left( n \right) $-action on $ \mathrm{Gr} \left( n-1, \mathfrak{su} \left( n \right) \right) $.
We define the chordal Grassmannian distance~\cite{bengtsson2007three} by $ d_c \left( W_1, W_2 \right) \defeq \frac{1}{2} \mathrm{Tr} \left( S_{W_1} -S_{W_2} \right)^2 $, where $ S_W $ denotes the unique orthogonal projection onto the subspace $ W \in \mathrm{Gr} \left( n-1, \mathfrak{su} \left( n \right) \right) $. This metric is $ \mathrm{SO} \left( \mathfrak{su} \left( n \right) \right) $-invariant, since:
\begin{multline*}
	\forall O \in \mathrm{SO} \left( \mathfrak{su} \left( n \right) \right), \; d_c \left( O \cdot W_1, O \cdot W_2 \right) = \frac{1}{2} \mathrm{Tr} \left( O S_{W_1} O^T -O S_{W_2} O^T \right)^2 = \\
	= \frac{1}{2} \mathrm{Tr} \left[ O \left( S_{W_1}- S_{W_2} \right) O^T \right]^2 = \frac{1}{2} \mathrm{Tr} \left( S_{W_1} -S_{W_2} \right)^2 = d_c \left( W_1, W_2 \right) .
\end{multline*}

We now arrive at the definition of the map $\phi$.
Given an ONB $p = \left\{ P_i \right\}_{i=1}^n \in \mathcal{M}_n $ (recall each $P_i$ is a rank-one projection), we define:
\begin{equation}\label{Pr_to_Gr}
	\phi \left( p \right) \defeq \mathrm{span}_{\mathbb{R}} \left\{ P_i -\frac{\mathds{1}}{n} \right\}_{i=1}^n .
\end{equation}
\begin{proposition}\label{prop:phi}
	\eqref{Pr_to_Gr} defines a smooth embedding $ \phi : \mathcal{M}_n \hookrightarrow \mathrm{Gr} \left(  n-1, \mathfrak{su} \left( n \right) \right) $. Moreover, $\phi$ is equivariant with respect to the $ \mathrm{U} \left( n \right) $-actions on both spaces.
\end{proposition}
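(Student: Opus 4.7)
My plan is to organize the proof into four natural steps: well-definedness on the quotient (together with equivariance), verifying the image sits in the Grassmannian with the right dimension, smoothness, and finally that $\phi$ is an injective immersion (and hence an embedding by compactness). Equivariance I would dispatch immediately with a direct computation: since $\mathrm{Ad}_U(P_i - \mathds{1}/n) = U P_i U^{\dagger} - \mathds{1}/n$ and conjugation is $\mathbb{R}$-linear, it commutes with taking the real span, so $\phi(U \cdot p) = \mathrm{Ad}_U(\phi(p))$. Well-definedness on $\mathcal{M}_n = \mathrm{U}(n)/\mathrm{C}_n$ is equally immediate: the span is independent of the ordering of the projections and of phases applied to each basis vector, since $\ket{v}\bra{v}$ is phase-invariant.

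Next, to check that $\phi(p)$ is genuinely an element of $\mathrm{Gr}(n-1, \mathfrak{su}(n))$, I would note that each $P_i - \mathds{1}/n$ is Hermitian with vanishing trace, hence lives in $\mathfrak{su}(n)$, and that completeness $\sum_i P_i = \mathds{1}$ yields the one linear relation $\sum_i (P_i - \mathds{1}/n) = 0$. Computing the Gram matrix gives $\mathrm{Tr}\bigl( (P_i - \mathds{1}/n)(P_j - \mathds{1}/n) \bigr) = \delta_{ij} - 1/n$, so the Gram matrix $I_n - \frac{1}{n} J$ has rank exactly $n-1$, confirming that the span is $(n-1)$-dimensional.

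For smoothness, I would lift $\phi$ to a map $\tilde\phi : \mathrm{U}(n) \to \mathrm{Gr}(n-1, \mathfrak{su}(n))$ along the smooth submersion $\mathrm{U}(n) \twoheadrightarrow \mathcal{M}_n$. Using the standard smooth embedding of the Grassmannian into the affine variety of orthogonal projections via $W \mapsto S_W$, it suffices to show that $V \mapsto S_{\tilde\phi(V)}$ is smooth. Since the Gram matrix computed above has invertible principal $(n-1) \times (n-1)$ minors, the frame $B(V)$ consisting of $P_1(V) - \mathds{1}/n, \ldots, P_{n-1}(V) - \mathds{1}/n$ has invertible Gram matrix, and the formula $S_W = B(B^{\top}B)^{-1}B^{\top}$ exhibits $S_W$ as a polynomial/rational function of the entries of $V$, hence smooth.

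The heart of the argument, and the main obstacle, is injectivity combined with the immersion property. I would derive both from a single MASA observation: the complex linear span of $W \cup \{\mathds{1}\}$ equals the complex linear span of $\{P_1, \ldots, P_n\}$, which is a maximal abelian subalgebra $\mathcal{A}(p) \subseteq \mathrm{Mat}_n(\mathbb{C})$, and the $P_i$ are uniquely characterized as its minimal projections. Consequently, $W$ recovers $p$ up to permutation, proving injectivity. For the immersion property, equivariance and transitivity reduce the check to a single point, say $\phi(e)$ for $e$ the standard basis; by orbit-stabilizer, I need the stabilizer $\mathrm{U}(n)_{\phi(e)}$ under $\mathrm{Ad}$ to equal $\mathrm{C}_n$. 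But $U$ stabilizes $\phi(e)$ iff $U$ normalizes the MASA of diagonal matrices, and that normalizer in $\mathrm{U}(n)$ consists precisely of monomial unitaries, i.e.\ $\mathrm{C}_n$. Hence $\phi$ identifies $\mathcal{M}_n = \mathrm{U}(n)/\mathrm{C}_n$ with the single $\mathrm{U}(n)$-orbit of $\phi(e)$ in the Grassmannian; the dimensions agree, so $d\phi$ is injective. Since $\mathcal{M}_n$ is compact, this injective immersion into a Hausdorff manifold is automatically an embedding.
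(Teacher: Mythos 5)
Your proposal is correct and establishes all the required pieces (well-definedness, image in the Grassmannian, equivariance, smoothness, injectivity, immersion, embedding via compactness), but it differs from the paper's proof in three substantive ways. For the dimension count, you compute the Gram matrix $\Tr\bigl((P_i-\tfrac{\mathds{1}}{n})(P_j-\tfrac{\mathds{1}}{n})\bigr)=\delta_{ij}-\tfrac{1}{n}$ and read off rank $n-1$ from $I_n-\tfrac{1}{n}J$; the paper instead argues by contradiction, comparing the matrix rank of $\sum a_iP_i$ with that of a scalar matrix. Your Gram computation is cleaner and does double duty, since it also powers your explicit smoothness argument via the frame formula $S_W=B(B^{\top}B)^{-1}B^{\top}$, whereas the paper gets smoothness more abstractly from the smoothness of the $\mathrm{SO}(\mathfrak{su}(n))$-action together with the characteristic property of the submersion $\pi$. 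The most interesting divergence is injectivity: you reconstruct $p$ directly from $\phi(p)$ by observing that $\mathrm{span}_{\mathbb{C}}(\phi(p)\cup\{\mathds{1}\})$ is the MASA generated by the $P_i$, whose minimal projections recover $p$ — a self-contained argument the paper does not make. The paper instead reduces injectivity (by transitivity and equivariance) to the stabilizer computation $\mathrm{Stab}_{\phi(e)}\cong\mathrm{C}_n$, and then gets the immersion property from the equivariant rank theorem; you still need that same stabilizer computation, which you phrase as ``the normalizer of the diagonal MASA is the monomial group'' — equivalent to the paper's argument via the unique eigenbasis of a diagonal matrix with distinct entries. One small point to tighten: the sentence ``the dimensions agree, so $d\phi$ is injective'' should explicitly invoke the standard fact that the orbit map $\mathrm{U}(n)/\mathrm{Stab}_{\phi(e)}\to\mathrm{Gr}(n-1,\mathfrak{su}(n))$ of a compact Lie group action is an injective immersion (hence an embedding), so that $\phi$ factors as a diffeomorphism onto this orbit; as written the dimension comparison alone does not immediately give injectivity of the differential.
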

\begin{proof}
	By definition of $ \mathfrak{su} \left( n \right) $, we have $ P_i -\frac{\mathds{1}}{n} \in \mathfrak{su} \left( n \right) $ for all $i$ (more generally, the set of all density matrices acting on $ \mathbb{C}^n $ is embedded in the affine space $ \mathfrak{su} \left( n \right) + \frac{\mathds{1}}{n} $ of unit-trace Hermitian operators). To see that the RHS of \eqref{Pr_to_Gr} has dimension $n-1$, first note that the elements $ P_i -\frac{\mathds{1}}{n} $ sum up to zero, hence are not linearly independent. Moreover, we now show that the subset (say) $ \left\{ P_i -\frac{\mathds{1}}{n} \right\}_{i=1}^{n-1} $ \emph{is} linearly independent. If a real linear combination $ \sum_{i=1}^{n-1} a_i \left( P_i -\frac{\mathds{1}}{n} \right) $ vanishes, then
	\begin{equation*}
		\sum_{i=1}^{n-1} a_i P_i = \frac{1}{n} \sum_{i=1}^{n-1} a_i \mathds{1} .
	\end{equation*}
	Now, consider the matrix ranks of both sides. Since the $P_i$ are all rank-one projections and the pairwise intersections $ \mathrm{im} \left( P_i \right) \cap \mathrm{im} \left( P_j \right) $ are all trivial (for $i \neq j$), we deduce that $ \mathrm{rank} \left( \sum_{i=1}^{n-1} a_i P_i \right) = \# \left\{ 1 \leq i \leq n-1 \mid a_i \neq 0 \right\} $. In contrast, the rank of the RHS is zero if $ \sum_{i=1}^{n-1} a_i = 0 $ and $n$ otherwise. The only way for the two ranks to agree is if $a_i=0$ for all $i$, as required. Thus, $ \left\{ P_i -\frac{\mathds{1}}{n} \right\}_{i=1}^n $ indeed span an $\left( n-1 \right)$-dimensional subspace of $ \mathfrak{su} \left( n \right) $; and the definition \eqref{Pr_to_Gr} is independent of the order of the basis elements $P_i$; thus, $\phi$ is indeed a well-defined map $ \mathcal{M}_n \rightarrow \mathrm{Gr} \left(  n-1, \mathfrak{su} \left( n \right) \right) $.
	
	Next, we show that $\phi$ is equivariant. For any $ U \in \mathrm{U} \left( n \right) $ and $ p \in \mathcal{M}_n $, we have: 
	\begin{multline*}
		\phi \left( U \cdot p \right) = \mathrm{span}_{\mathbb{R}} \left\{ U P_i U^\dagger -\frac{\mathds{1}}{n} \right\}_{i=1}^n = \mathrm{span}_{\mathbb{R}} \left\{ U \left( P_i -\frac{\mathds{1}}{n} \right) U^\dagger \right\}_{i=1}^n = \\ 
		= \mathrm{span}_{\mathbb{R}} \left\{ \mathrm{Ad}_U \left( P_i -\frac{\mathds{1}}{n} \right) \right\}_{i=1}^n = \mathrm{Ad}_U \left( \mathrm{span}_{\mathbb{R}} \left\{ P_i -\frac{\mathds{1}}{n} \right\}_{i=1}^n \right) = \mathrm{Ad}_U \left( \phi \left( p \right) \right) ,
	\end{multline*}
	where $ \mathrm{span}_{\mathbb{R}} $ and $ \mathrm{Ad}_U $ commute since $ \mathrm{Ad}_U $ is an invertible $\mathbb{R}$-linear transformation.
	
	We can now show that $\phi$ is smooth, using transitivity of the $ \mathrm{U} \left( n \right) $-action on $ \mathcal{M}_n $ and equivariance of $\phi$. First, note $\phi$ is determined by its value on a single point, say the standard basis $ \mathds{1} \mathrm{C}_n \in \mathcal{M}_n $ (the coset of the identity matrix $\mathds{1}$). This is true because $ \phi \left( U \mathrm{C}_n \right) = \phi \left( U \cdot \mathds{1} \mathrm{C}_n \right) = \mathrm{Ad}_U \left( \phi \left( \mathds{1} \mathrm{C}_n \right) \right) $.
	Now, the map $ \pi : \mathrm{U} \left( n \right) \twoheadrightarrow \mathcal{M}_n $ is a surjective smooth submersion, hence by the characteristic property of surjective smooth submersions (Theorem 4.29 in \cite{Lee2012Smooth_Ch4}), $\phi$ is smooth if and only if $ \phi \circ \pi : \mathrm{U} \left( n \right) \rightarrow \mathrm{Gr} \left(  n-1, \mathfrak{su} \left( n \right) \right) $ is smooth.
	For any $ U \in \mathrm{U} \left( n \right) $ we have $ \phi \circ \pi \left( U \right) = \phi \left( U \mathrm{C}_n \right) = \mathrm{Ad}_U \left( \phi \left( \mathds{1} \mathrm{C}_n \right) \right) $, which is smooth since the $ \mathrm{SO} \left( \mathfrak{su} \left( n \right) \right) $-action on the Grassmannian is smooth.
	
	The equivariant rank theorem implies that $\phi$ has constant rank. Thus, if $\phi$ is injective then it is a smooth immersion, hence also a smooth embedding (since its domain $\mathcal{M}_n$ is compact). Let $p,q \in \mathcal{M}_n$, and suppose $ \phi \left( p \right) = \phi \left( q \right) $. By transitivity, there exists $ U \in \mathrm{U} \left( n \right) $ such that $ q = U \cdot p $; hence $  \phi \left( p \right) =  \phi \left( U \cdot p \right) = \mathrm{Ad}_U \left( \phi \left( p \right) \right) $, i.e. $ U \in \mathrm{Stab}_{\phi \left( p \right)} $, the stabilizer of $\phi \left( p \right) \in \mathrm{Gr} \left(  n-1, \mathfrak{su} \left( n \right) \right)$. Therefore, $\phi$ is injective if and only if $ U \in \mathrm{Stab}_{\phi \left( p \right)} $ implies $ U \cdot p = p $, i.e. $ \mathrm{Stab}_{\phi \left( p \right)} $ is a subgroup of the stabilizer of $p \in \mathcal{M}_n$, which is isomorphic to $\mathrm{C}_n$. Equivalently, the stabilizers are isomorphic, since $ \mathrm{Stab}_p \subseteq \mathrm{Stab}_{\phi \left( p \right)} $ follows directly from equivariance.
	By transitivity of the $ \mathrm{U} \left( n \right) $-action on $ \mathcal{M}_n $ and equivariance of $\phi$, the image $ \phi \left( \mathcal{M}_n \right) $ comprises a single orbit, hence a homogeneous $ \mathrm{U} \left( n \right) $-space. Thus, the stabilizer subgroups for all points of $\phi \left( \mathcal{M}_n \right)$ are isomorphic; hence, it suffices to show that $ \mathrm{Stab}_{\phi \left( \mathds{1} \mathrm{C}_n \right)} \cong \mathrm{C}_n $.
	
	The standard basis $ \mathds{1} \mathrm{C}_n $ is represented by the projections $\left\{ \ket{i} \bra{i} \right\}_{i=1}^n $, which span the subalgebra $ \mathfrak{t} \subseteq \mathfrak{su} \left( n \right) $ of traceless diagonal Hermitian matrices. A unitary $ U \in \mathrm{U} \left( n \right) $ belongs to the stabilizer of $ \mathfrak{t} \in \mathrm{Gr} \left(  n-1, \mathfrak{su} \left( n \right) \right) $ if and only if $ U \mathfrak{t} U^\dagger = \mathfrak{t} $. Consider a matrix $A \in \mathfrak{t}$ with distinct diagonal entries. $ U A U^\dagger $ is diagonal if and only if the columns of $U^\dagger$ form an orthonormal eigenbasis of $A$. Since the entries of $A$ are distinct, it has a unique orthonormal eigenbasis -- the standard basis -- up to phases and reordering. This precisely means that $ U^\dagger \in \mathrm{C}_n $, which holds if and only if $ U \in \mathrm{C}_n $. This completes the proof.
\end{proof}

This proposition has several corollaries. First, since $\phi$ is injective, we can use it to pull back $ d_c $ and obtain a metric on $ \mathcal{M}_n $. As already established in \cite{bengtsson2007three}, the pullback metric is the MUBness from \eqref{MUBness}. As mentioned above, these facts imply that $ \mathrm{U} \left( n \right) $ acts on $ \mathcal{M}_n $ by isometries with respect to the MUBness metric -- a fact we shall use profusely in the next subsection.

Moreover, the Grassmannian $ \mathrm{Gr} \left( n-1, \mathfrak{su} \left( n \right) \right) $ is equipped with a tautological (real) vector bundle of rank $n-1$: the fiber over $ V \in \mathrm{Gr} \left( n-1, \mathfrak{su} \left( n \right) \right) $ is the vector space $V$. The pullback of this bundle defines a vector bundle $ \mathcal{B} \rightarrow \mathcal{M}_n $, where the fiber over $ p = \left\{ P_i \right\}_{i=1}^n \in \mathcal{M}_n $ is given by $ \mathrm{span}_{\mathbb{R}} \left\{ P_i -\frac{\mathds{1}}{n} \right\}_{i=1}^n $. Since the tautological bundle on the Grassmannian is $ \mathrm{SO} \left( \mathfrak{su} \left( n \right) \right) $-equivariant, $ \mathcal{B} $ is $ \mathrm{U} \left( n \right) $-equivariant: for a pair $ \left( p, A \right) $ with $ p \in \mathcal{M}_n $ and $ A \in \phi \left( p \right) $, define $ U \cdot \left( p, A \right) \defeq \left( U \cdot p, U A U^\dagger \right) $. The data of the equivariant embedding $\phi : \mathcal{M}_n \hookrightarrow \mathrm{Gr} \left( n-1, \mathfrak{su} \left( n \right) \right) $ is the same as that of the equivariant rank-$\left( n-1 \right)$ vector bundle $ \mathcal{B} $. Thus, we get another perspective as to why $ \mathcal{M}_n $ embeds in the Grassmannian, hence a new way to look at the geometric origins of MUBness.

As an aside, we note that the space of global sections $ \Gamma \left( \mathcal{B} \right) $ is naturally a representation of $ \mathrm{U} \left( n \right) $.
The embedding $\phi$ is somewhat analogous to how the flag manifold $ \tilde{\mathcal{M}}_n $ (a homogeneous $ \mathrm{SL}_n$-space) embeds as a closed $ \mathrm{SL}_n \left( \mathbb{C} \right) $-orbit in $ \mathbb{P} V $, where $V$ corresponds to an irreducible representation of $ \mathrm{SL}_n \left( \mathbb{C} \right) $. The pullback of the tautological line bundle on $ \mathbb{P} V $ yields an equivariant line bundle, whose space of global sections is a representation of $ \mathrm{SL}_n \left( \mathbb{C} \right) $, equivalent to $V$. Alternatively, $\phi$ can be considered analogous to the embedding of $ \tilde{\mathcal{M}}_n $ as a coadjoint orbit in $ \mathfrak{sl} \left( n; \mathbb{C} \right)^* $.

\subsection{Subsets of $q$-unbiased bases}\label{sec:Vq}
Let $ p, q \in \mathcal{M}_n $ be any two ONBs; recall that $ D^2 \left( p, q \right) = n-1$ iff $p$ and $q$ are unbiased. Therefore:
\begin{equation}
	\mathcal{V} \left( q \right) \defeq \left\{ p \in \mathcal{M}_n \mid D^2 \left( p, q \right) = n-1 \right\} 
\end{equation}
defines $ \mathcal{V}\left( q \right) \subseteq \mathcal{M}_n $ as the subset of all bases that are $ q $-unbiased (i.e. unbiased with respect to $q$).
For example, consider the standard basis $e \defeq \mathds{1} \mathrm{C}_n $ (the coset of the identity matrix $ \mathds{1} $). We have 
\begin{equation}
	\mathcal{V} \left( e \right) = \left\{ H \mathrm{C}_n \mid H \in \mathrm{U} \left( n \right) \text{ and } \forall i,j, \, \abs{H_{ij}}=1/\sqrt{n} \right\} .
\end{equation}
Matrices $H$ that satisfy these conditions (unitary matrices where all entries have modulus $ 1 / \sqrt{n} $) are called \textit{complex Hadamard matrices}.
Two Hadamard matrices $ H_1, H_2 $ represent the same point in $ \mathcal{V} \left( e \right) $ iff $ H_1^\dagger H_2 \in \mathrm{C}_n $. Note that $ \mathcal{V} \left( q \right) $ is a closed subset of $ \mathcal{M}_n $ (for every $q$), but generally it may not be a submanifold (i.e. it may be singular). Moreover, any two $ \mathcal{V} \left( q \right) $ are homeomorphic. Indeed, let $ U \in \mathrm{U} \left( n \right) $ be a unitary that maps $ q $ to $ q' $ via the left action: $ q' = U \cdot q $ (transitivity of the action means that such a unitary always exists). Then we have:
\begin{align}\label{V_q_diffeo}
	\mathcal{V} \left( U \cdot q \right) & = \left\{ p \in \mathcal{M}_n \mid D^2 \left( p, U \cdot q \right) = n-1 \right\} = \nonumber\\
	& = \left\{ p \in \mathcal{M}_n \mid D^2 \left( U^\dagger \cdot p, q \right) = n-1 \right\} = \left\{ p \in \mathcal{M}_n \mid U^\dagger \cdot p \in \mathcal{V} \left( q \right) \right\} = \nonumber\\
	& = U \cdot \mathcal{V} \left( q \right) ,
\end{align}
using the fact that $ \mathrm{U} \left( n \right) $ acts via isometries. 
Let $ \mathrm{U} \left( n \right)_q \defeq \left\{ U \in \mathrm{U} \left( n \right) \mid U \cdot q = q \right\} $ denote the isotropy (stabilizer) subgroup. By the definition of a quotient space, if $q = V \mathrm{C}_n$ then $ \mathrm{U} \left( n \right)_q = V \mathrm{C}_n V^\dagger $.
Now, let $ U \in \mathrm{U} \left( n \right)_q $; from \eqref{V_q_diffeo} we see that $ U $ defines an \textit{automorphism} $ U : \mathcal{V} \left( q \right) \rightarrow \mathcal{V} \left( q \right) $. Thus, for $ q = V \mathrm{C}_n $ we have that $ \mathcal{V} \left( V \mathrm{C}_n \right) $ is closed under the action of the isotropy group $ V \mathrm{C}_n V^\dagger $. For example, the set $ \mathcal{V} \left( e \right) $ of complex Hadamard matrices is closed under left multiplication by a $ \mathrm{C}_n $ matrix.

\section{Mutually unbiased bases in the space of orthonormal bases}\label{sec:procedure}
In this section, we consider lists (finite ordered sets) of MUBs. Every MUB list may be constructed via a generic iterative procedure. Let us describe this procedure informally. We start with an empty list, and each iteration adds one additional basis. The additional basis is chosen out of the set $X$, which is guaranteed to comprise all viable choices for the next basis. We initialize $X$ as $\mathcal{M}_n$; in the $i$th iteration, we choose some $q \in X$; and then we replace $X$ by its subset of $q$-unbiased bases. We stop under one of two conditions: either we reached the desired number of MUBs; or $X$ is empty. The latter means we constructed an \textit{unextendible} MUB list, i.e., there is no basis which is mutually unbiased with respect to every element of the list.


Using the definitions from the previous section, we can see that the $q$-unbiased points in $X$ are given by $ X \cap \mathcal{V} \left( q \right) $. Therefore, after $k$ iterations, the set $X$ is given by:
\begin{equation}
	\mathcal{N} \left( q_1, \ldots, q_k \right) \defeq \bigcap_{j=1}^k \mathcal{V} \left( q_j \right) .
\end{equation}
The elements of $\mathcal{N} \left( q_1, \ldots, q_k \right)$ comprise the ONBs which are unbiased with respect to $q_i$ for all $i = 1, 2, \ldots, k$.
However, explicit descriptions of the subsets $ \mathcal{N} \left( q_1, \ldots, q_k \right) $ may be difficult to obtain; the lack of such explicit descriptions means that our procedure is conceptual rather than computational. 

We now describe our procedure formally:
\begin{algorithmic}[1]
	\Require $n \geq 1$, $ m \in \left\{ 1, \ldots, n+1 \right\} $
	\Ensure $\forall i \neq j, \; q_i, q_j \in \mathcal{M}_n$ are mutually unbiased, and $ \left\{ q_i \right\}_{i=1}^{k-1} $ is unextendible or has length $m$
	\State $ k \Leftarrow 1 $
	\While{$\mathcal{N} \left( q_1, \ldots, q_{k-1} \right) \neq \emptyset$ and $ k \leq m $}
	\State \textbf{choose} $ q_{k} \in \mathcal{N} \left( q_1, \ldots, q_{k-1} \right) $
	\State $ k \Leftarrow k+1 $
	\EndWhile
\end{algorithmic}
For $k=1$, $ \mathcal{N} \left( \varepsilon \right) $ is interpreted as $ \mathcal{M}_n $, where $ \varepsilon $ denotes the ``empty list'' of $0$ points.

We now define a notion of equivalence between lists of MUBs.
\begin{definition}\label[definition]{def:equiv_MUB_lists}
	Let $ \left( p_1, \ldots, p_k \right) $ and $ \left( q_1, \ldots, q_k \right) $ be two lists of $k$ MUBs in dimension $n$. These two MUB lists are said to be \emph{equivalent} if there exists a unitary $U \in \mathrm{U} \left( n \right)$ such that $ U \cdot p_i = q_i $ for all $i$.
\end{definition}
The remainder of this section deals with the following question: when does the above procedure produce equivalent MUB lists? As we shall see, the subset $\mathcal{N} \left( q_1, \ldots, q_{k-1} \right)$ is closed under the action of a certain $ \mathrm{U} \left( n \right) $-subgroup: the \textit{simultaneous stabilizer} of $ q_1, \ldots, q_{k-1} $. And two choices for $q_k$ in the $k$-th step of the procedure yield equivalent MUB lists, if and only if they lie in the same orbit of the simultaneous stabilizer. This is the content of \Cref{thm:main}. Next, we study properties of these orbits, and show they are finite and discrete for any $k>2$.

Note we refer to $\mathcal{N} \left( q_1, \ldots, q_k \right)$ as mere subsets; in fact, their partition into orbits (of the simultaneous stabilizer) corresponds to a \textit{stratification}. However, we do not use this fact.

\subsection{Resolving redundancies}\label{sec:redundancies}
In the beginning of the current section we outlined a procedure that can yield any MUB list. Let us take a closer look at this procedure, and review what is already known about equivalent MUB lists.

We start by noting that existing analytic constructions of MUBs have utilized procedures similar to ours. Brierley and Weigert~\cite{brierley2009all} construct all MUB sets in dimensions $2$ to $5$. Given a list of $k-1$ MUBs, they compute all vectors which are unbiased to all elements of the current bases, and then find all the ways to put them together into ONBs.
To simplify the enumeration of all MUB sets, they define a \textit{standard form} for MUB lists, and claim that any MUB list is equivalent to a standard-form one. The following Lemma describes the standard form using this paper's terminology.
\begin{lemma}[\cite{brierley2009all}]\label[lemma]{lemma:MUB_list_standard_form}
	Any list of $k$ MUBs is equivalent to a list of the form $ \left( \mathds{1} \mathrm{C}_n, H_2 \mathrm{C}_n \ldots, H_k \mathrm{C}_n \right) $ that obeys the following conditions:
\begin{enumerate}[label=(\roman*)]
	\item The first basis is $e = \mathds{1} \mathrm{C}_n$, the standard basis;
	\label{cond_I}
	\item The matrices $H_2 \ldots, H_k$ are complex Hadamard matrices;
	\label{cond_H}
	\item The entries in the first column of $H_2$ all equal $1 / \sqrt{n}$;
	\label{cond_col}
	\item The first row of each of the Hadamard matrices $H_2 \ldots, H_k$ has only the entries $1 / \sqrt{n}$.
	\label{cond_row}
\end{enumerate}
\end{lemma}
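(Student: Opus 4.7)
The plan is to prove the lemma by combining two distinct sources of freedom. First, by \Cref{def:equiv_MUB_lists}, we are free to apply any $U \in \mathrm{U}\left( n \right)$ to the entire list simultaneously, obtaining an equivalent list. Second, once a point $p_i \in \mathcal{M}_n$ has been fixed, its matrix representative is determined only up to right multiplication by $\mathrm{C}_n$. I would achieve the four conditions in order, interleaving uses of these two operations.

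For \ref{cond_I} and \ref{cond_H}: the $\mathrm{U}\left( n \right)$-action on $\mathcal{M}_n$ is transitive, so there exists $U_0 \in \mathrm{U}\left( n \right)$ with $U_0 \cdot p_1 = e$. Replacing the list by $\left( U_0 \cdot p_1, \ldots, U_0 \cdot p_k \right)$, we may assume $p_1 = e$. Each remaining $p_i$ is still unbiased with respect to $e$, so it lies in $\mathcal{V}\left( e \right)$ and therefore admits some complex Hadamard representative $H_i \in \mathrm{U}\left( n \right)$.

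For \ref{cond_col}, I would apply a further equivalence from the stabilizer $\mathrm{U}\left( n \right)_e = \mathrm{C}_n$ (as identified in the proof of \Cref{prop:phi}), which automatically preserves \ref{cond_I}. Writing the first column of $H_2$ as $\left( e^{i\theta_1}, \ldots, e^{i\theta_n} \right)^T / \sqrt{n}$ and taking $D \defeq \mathrm{diag}\left( e^{-i\theta_1}, \ldots, e^{-i\theta_n} \right) \in \mathrm{T}^n \subseteq \mathrm{C}_n$, left multiplication by $D$ keeps $e$ fixed, sends each $H_i$ to the still-Hadamard matrix $DH_i$, and makes the first column of $DH_2$ equal to $\left( 1, \ldots, 1 \right)^T / \sqrt{n}$.

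For \ref{cond_row}, I would switch to the second source of freedom. For each $i \geq 2$, if the first row of $H_i$ has entries $e^{i\phi^{(i)}_j}/\sqrt{n}$, then setting $D_i \defeq \mathrm{diag}\left( e^{-i\phi^{(i)}_1}, \ldots, e^{-i\phi^{(i)}_n} \right) \in \mathrm{T}^n \subseteq \mathrm{C}_n$ produces a new Hadamard representative $H_i D_i$ of $p_i$ whose first row is $\left( 1, \ldots, 1 \right)/\sqrt{n}$. The one place a subtlety arises, and the main thing to verify, is that this final step does not undo \ref{cond_col}: after the previous step we have $H_2\left[ 1,1 \right] = 1/\sqrt{n}$, so $\phi^{(2)}_1 = 0$, which forces the $\left( 1,1 \right)$-entry of $D_2$ to equal $1$; consequently the first column of $H_2 D_2$ coincides with that of $H_2$, and \ref{cond_col} survives. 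All remaining claims are immediate from transitivity, the identification $\mathrm{U}\left( n \right)_e = \mathrm{C}_n$, and the coset structure $\mathcal{M}_n = \mathrm{U}\left( n \right) / \mathrm{C}_n$.
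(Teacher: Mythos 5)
Your proof is correct and follows essentially the same route as the paper: transitivity of the $\mathrm{U}\left(n\right)$-action for \ref{cond_I} and \ref{cond_H}, then left multiplication by a diagonal element of $\mathrm{C}_n$ (the stabilizer of $e$) to dephase a column and right multiplication by $\mathrm{C}_n$ to dephase rows. The only difference is that you handle \ref{cond_col} before \ref{cond_row} while the paper does the reverse, and in both cases the key compatibility check is the same observation that the $\left(1,1\right)$-entry being already equal to $1/\sqrt{n}$ forces the second corrective diagonal to act trivially on the row or column fixed in the first step.
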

\begin{proof}
	Let $ \left( q_1, \ldots, q_k \right) $ be a MUB list. Since $ \mathcal{M}_n $ is a homogeneous space, there always exists a unitary $U$ such that $U \cdot q_1 = e$. Thus, as a first step we choose such $U$ and note that $ \left( q_1, \ldots, q_k \right) $ is equivalent to $ \left( U \cdot q_1, \ldots, U \cdot q_k \right) $. Now, $ U \cdot q_1 = e $; and since it is a MUB list, we must have $ U \cdot q_i \in \mathcal{N} \left( e \right) $ for all $i>1$. As we had mentioned previously, $ \mathcal{N} \left( e \right) = \mathcal{V} \left( e \right) $ comprises all bases $H \mathrm{C}_n$ where $H$ is a $n \times n$ complex Hadamard matrix. Thus, any MUB list is equivalent to one of the form $ \left( \mathds{1} \mathrm{C}_n, H_2 \mathrm{C}_n \ldots, H_k \mathrm{C}_n \right) $, where $H_i$ are complex Hadamard matrices. This fact ensures that \cref{cond_I,cond_H} can always be satisfied.
	
	Next, recall that all entries of a Hadamard matrix have the form $e^{i \theta}/\sqrt{n}$; so we can ``dephase'' the first row of each $H_i$ by replacing $H_i \rightarrow H_i D$, where $D \in \mathrm{C}_n$ is a suitable diagonal matrix. But how can we satisfy the third condition? Similar to how rows can be dephased via right-multiplication by a $\mathrm{C}_n$ element, a column can be dephased via \textit{left}-multiplication by a $\mathrm{C}_n$ element. Explicitly: let $ \left( \mathds{1} \mathrm{C}_n, H_2 \mathrm{C}_n, \ldots, H_k \mathrm{C}_n \right) $ be a list that obeys \cref{cond_I,cond_H,cond_row}, and let the $ \left( 1, e^{i \theta_2}, \ldots, e^{i \theta_n} \right)^T / \sqrt{n} $ be the first column of $H_2$. Define $D$ to be the diagonal matrix with entries $ e^{-i \theta_j} $; clearly $D H_2$ satisfies \cref{cond_col}. Of course, we do not have the freedom to replace $H_2$ by $ D H_2 $; but we do have the freedom to act on \textit{the entire list} with $D$ on the left. 
	Crucially, the first element remains unchanged, since $ D \cdot \mathds{1} \mathrm{C}_n = D \mathrm{C}_n = \mathds{1} \mathrm{C}_n $ (because $D \in \mathrm{C}_n$); and one easily verifies that each $D H_i$ is a complex Hadamard matrix with the same first row as $H_i$. Thus, the new list $ \left( \mathds{1} \mathrm{C}_n, D H_2 \mathrm{C}_n, \ldots, D H_k \mathrm{C}_n \right) $ satisfies \cref{cond_I,cond_H,cond_col,cond_row}, as required.
\end{proof}
As is evident from the proof, there is a deep reason why \cref{cond_I,cond_col} can be satisfied simultaneously. We can replace $ \left( q_1, \ldots, q_k \right) \rightarrow \left( U \cdot q_1, \ldots, U \cdot q_k \right) $, so we choose $U$ that fixes the first basis to be the standard one. However, this condition does not fix $U$ completely; rather, we can multiply it by an element of $ \mathrm{C}_n $, which is the stabilizer group of $e$. Indeed, if $U \cdot q_1 = \mathds{1} \mathrm{C}_n$ and $P \in \mathrm{C}_n$, then $ PU \cdot q_1 = P \cdot \mathds{1} \mathrm{C}_n = P \mathrm{C}_n = \mathds{1} \mathrm{C}_n $.

We can put this insight in more general terms. First, consider $k=1$, i.e. lists $ \left( q_1 \right) $ comprising a single basis (it is a MUB list, vacuously). By homogeneity of $\mathcal{M}_n$, all such lists are equivalent; so without loss of generality, we may assume $q_1 = \mathds{1} \mathrm{C}_n$. Now, suppose we have two lists of length $k=2$, and we wish to decide whether they are equivalent or not. We can bring the lists to the form $ \left( \mathds{1} \mathrm{C}_n, p \right) $, $ \left( \mathds{1} \mathrm{C}_n, q \right) $, where $p, q \in \mathcal{V} \left( e \right)$.
The two lists are equivalent iff there exists a unitary $U$ that obeys $ U \cdot p = q $ \textit{and} $ U \cdot \mathds{1} \mathrm{C}_n = \mathds{1} \mathrm{C}_n $. The latter condition means that $U \in \mathrm{C}_n$. As we have seen in \Cref{sec:Vq}, $\mathcal{V} \left( e \right)$ is closed under the action of $\mathrm{C}_n$. However, it is not always a homogeneous space. Hence, there may not exist a unitary $U \in \mathrm{C}_n$ that maps $p$ to $q$. By definition, such a unitary exists iff $p, q$ lie in the same orbit of the $\mathrm{C}_n$ action. 
Recall that two complex Hadamard matrices $H_1, H_2$ are said to be equivalent if there exist permutation matrices $ P_1, P_2 $ and diagonal unitary matrices $ D_1, D_2 $ such that:
\begin{equation}
	H_1 = D_1 P_1 H_2 P_2 D_2 .
\end{equation}
Since the torus is normal in $ \mathrm{C}_n $, there exists some $ \tilde{D}_1 \in \mathrm{T}^n $ s.t. $ P_1 \tilde{D}_1 = D_1 P_1 $. Hence we can replace the above condition by:
\begin{equation}
	H_1 = P_1 \tilde{D}_1 H_2 P_2 D_2 ,
\end{equation}
where it is transparent that $H_1, H_2$ are equivalent iff the points $ H_1 \mathrm{C}_n, H_2 \mathrm{C}_n \in \mathcal{V} \left( e \right) $ lie in the same $\mathrm{C}_n$-orbit. Therefore, the classification of complex Hadamard matrices is equivalent to the classification of MUB pairs.

We now state and prove a generalization of this idea. First, recall from \Cref{sec:Vq} that the stabilizer of $q_i = U_i \mathrm{C}_n$ is $\mathrm{U} \left( n \right)_{q_i} = U_i \mathrm{C}_n U_i^\dagger$.
We define the \textit{simultaneous stabilizer} of $ q_1, \ldots, q_k \in \mathcal{M}_n $ as the intersection $ \mathrm{U} \left( n \right)_{q_1, \ldots, q_k} \defeq \bigcap_{i=1}^k \mathrm{U} \left( n \right)_{q_i} $. As a subgroup of $ \mathrm{U} \left( n \right) $, the simultaneous stabilizer acts on  $ \mathcal{M}_n $ (the restriction of the $ \mathrm{U} \left( n \right) $-action \eqref{Un_action}). The following theorem characterizes equivalent MUB lists via orbits of this action.
\begin{theorem}\label{thm:main}
	If $ \left( q_1, \ldots, q_{k-1} \right) $ is a MUB list, then $ \mathcal{N} \left( q_1, \ldots, q_{k-1} \right) $ is closed under the action of the simultaneous stabilizer  $ \mathrm{U} \left( n \right)_{q_1, \ldots, q_{k-1} } $.
	Moreover, for any $p,r \in \mathcal{N} \left( q_1, \ldots, q_{k-1} \right) $, the two MUB lists $ \left( q_1, \ldots, q_{k-1}, p \right) $ and $ \left( q_1, \ldots, q_{k-1}, r \right) $ are equivalent iff $p$ and $r$ belong to the same $\mathrm{U} \left( n \right)_{q_1, \ldots, q_{k-1} }$-orbit in $ \mathcal{N} \left( q_1, \ldots, q_{k-1} \right) $.
\end{theorem}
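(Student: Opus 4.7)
The theorem splits into two assertions, both of which I plan to extract directly from the framework already built up in Section~\ref{sec:geometry}. For the first assertion, I would invoke the fact established at the end of Section~\ref{sec:Vq} that each subset $\mathcal{V}(q_j)$ is closed under the action of its isotropy group $\mathrm{U}(n)_{q_j}$. Since the simultaneous stabilizer is by definition the intersection $\mathrm{U}(n)_{q_1, \ldots, q_{k-1}} = \bigcap_{i=1}^{k-1} \mathrm{U}(n)_{q_i}$, any element of it preserves each $\mathcal{V}(q_i)$ individually, and hence preserves the intersection $\mathcal{N}(q_1, \ldots, q_{k-1}) = \bigcap_{i=1}^{k-1} \mathcal{V}(q_i)$. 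That settles the closure claim in essentially one line.

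For the orbit characterization, the plan is to unfold \Cref{def:equiv_MUB_lists} directly. In one direction, if $r = U \cdot p$ for some $U \in \mathrm{U}(n)_{q_1, \ldots, q_{k-1}}$, then by construction $U$ fixes each of $q_1, \ldots, q_{k-1}$ and maps $p$ to $r$, which is precisely a witness that $(q_1, \ldots, q_{k-1}, p)$ and $(q_1, \ldots, q_{k-1}, r)$ are equivalent as ordered lists. Conversely, if the two lists are equivalent, some $U \in \mathrm{U}(n)$ satisfies $U \cdot q_i = q_i$ for $1 \leq i \leq k-1$ and $U \cdot p = r$; the first $k-1$ conditions are exactly the statement that $U \in \mathrm{U}(n)_{q_1,\ldots,q_{k-1}}$, and the remaining condition places $r$ in the $\mathrm{U}(n)_{q_1,\ldots,q_{k-1}}$-orbit of $p$.

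In short, the proof is a direct unpacking of definitions once the $\mathrm{U}(n)_q$-invariance of $\mathcal{V}(q)$ from Section~\ref{sec:Vq} is in hand. There is no genuinely hard step; the substance of the theorem lies in having chosen the right objects (the $\mathrm{U}(n)$-action on $\mathcal{M}_n$, the subsets $\mathcal{V}(q)$, and the simultaneous stabilizers) so that the notion of equivalence admits a clean geometric reformulation. The only point worth flagging carefully is that equivalence of lists is inherently \emph{ordered}: because the two lists share the initial segment $q_1,\ldots,q_{k-1}$, the implementing unitary is forced to stabilize each $q_i$ individually rather than merely permute them, so no subtlety arises from the possibility that $U$ could reshuffle the common prefix.
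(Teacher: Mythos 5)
Your proposal is correct and follows essentially the same route as the paper's own proof: closure of $\mathcal{N}(q_1,\ldots,q_{k-1})$ follows from the $\mathrm{U}(n)_{q_i}$-invariance of each $\mathcal{V}(q_i)$ established in \Cref{sec:Vq}, and the orbit characterization is a direct unpacking of \Cref{def:equiv_MUB_lists}. Your closing observation that the shared ordered prefix forces the unitary to stabilize each $q_i$ individually is a nice explicit articulation of a point the paper leaves implicit.
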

\begin{proof}
	First, recall from \Cref{sec:Vq} that $ \mathcal{V} \left( q_i \right) $ is closed under the action of $ \mathrm{U} \left( n \right)_{q_i} $. This implies the intersection $\mathcal{N} \left( q_1, \ldots, q_{k-1} \right) = \bigcap_{i=1}^{k-1} \mathcal{V} \left( q_i \right) $ is preserved by each stabilizer $ \mathrm{U} \left( n \right)_{q_i} $, hence by $\mathrm{U} \left( n \right)_{q_1, \ldots, q_{k-1} }$. Thus, indeed there is a well-defined $\mathrm{U} \left( n \right)_{q_1, \ldots, q_{k-1} }$-action on $ \mathcal{N} \left( q_1, \ldots, q_{k-1} \right) $.

	By definition, the two lists $ \left( q_1, \ldots, q_{k-1}, p \right) $ and $ \left( q_1, \ldots, q_{k-1}, r \right) $ are equivalent iff there exists a unitary $U$ that satisfies both conditions: 
	\begin{itemize}
		\item $ U \cdot q_i = q_i $ for all $1 \leq i \leq k-1$;
		\item $U \cdot p = r$.
	\end{itemize}
	$U$ obeys the first condition iff it stabilizes all $q_i$, which occurs iff $ U \in \mathrm{U} \left( n \right)_{q_1, \ldots, q_{k-1} } $.
	By the definition of orbit, there exists an element $U \in \mathrm{U} \left( n \right)_{q_1, \ldots, q_{k-1} }$ that maps $p$ to $r$ iff these two points belong to the same $\mathrm{U} \left( n \right)_{q_1, \ldots, q_{k-1} }$-orbit. This completes the proof.
\end{proof}
In the context of our procedure, the above theorem completely characterizes the inequivalent choices for the $k$th basis.
\begin{remark*}
	Our theorem characterizes ordered MUB lists. However, when considering \textit{MUB sets}, the order of MUBs should be irrelevant. Therefore, two inequivalent MUB lists may correspond to equivalent MUB sets; this can occur, for example, if one is obtained from the other by reordering.
\end{remark*}

\subsection{The orbit structure of $ \mathcal{N} \left( q_1, \ldots, q_{k} \right) $}\label{sec:orbit_structure}
As pointed out earlier, the subsets $ \mathcal{N} \left( q_1, \ldots, q_{k} \right) $ may fail to be submanifolds of $\mathcal{M}_n$. This fact poses difficulties in analyzing our procedure; otherwise, we could have hoped to compute the dimension of every $ \mathcal{N} \left( q_1, \ldots, q_{k} \right) $, and use this data to find the maximal number of steps (hence the maximal MUB set). But if $ \mathcal{N} \left( q_1, \ldots, q_{k} \right) $ is not a differentiable manifold (i.e. it is singular), then its dimension is not even well-defined. In this subsection, we study the decomposition of $ \mathcal{N} \left( q_1, \ldots, q_{k} \right) $ into $\mathrm{U} \left( n \right)_{q_1, \ldots, q_k }$-orbits, where $\mathrm{U} \left( n \right)_{q_1, \ldots, q_k }$ is the simultaneous stabilizer defined in \Cref{sec:redundancies}. We show that this decomposition corresponds to a \textit{stratification} of $ \mathcal{N} \left( q_1, \ldots, q_{k} \right) $.
We start with the familiar case of $k=1$, corresponding to the classification of complex Hadamard matrices. Then we mention the general case. 
As our theorem suggests, an explicit description of the stratification in the general case can be used to simplify the construction of MUB lists.

We start with the simplest case: $k=1$ and $q_1 = e$ is the standard basis; so $ \mathcal{N} \left( q_1 \right) = \mathcal{V} \left( e \right) $.
For some values of $n$ (e.g. $2,3,5$), \textit{all} complex Hadamard matrices are equivalent; that is, the $ \mathrm{C}_n $ action on $ \mathcal{V} \left( e \right) $ is transitive, thus making it a homogeneous $ \mathrm{C}_n $-space. More generally, $ \mathcal{V} \left( e \right) $ decomposes as a disjoint union of double cosets:
\begin{equation}\label{double_cosets}
	\mathcal{V} \left( e \right) = \bigsqcup_{i \in \mathcal{I}}  \left( \mathrm{C}_n H_i \mathrm{C}_n \right) / \mathrm{C}_n ,
\end{equation}
where the quotient should be understood as referring only to the right $ \mathrm{C}_n $ factor. Another way of writing this down is:
\begin{equation}\label{classify_Hadamard}
	\mathcal{V} \left( e \right) = \bigsqcup_{i \in \mathcal{I}} \mathrm{C}_n \cdot q_i ,
\end{equation}
where each element in the disjoint union is the orbit (under the left $\mathrm{C}_n$ action) of the point $q_i \defeq H_i \mathrm{C}_n \in \mathcal{V} \left( e \right) $. Each $H_i$ is a representative of a double coset, or equivalently -- a representative of an equivalence class of complex Hadamard matrices.
Hence, finding the partition \eqref{classify_Hadamard} is the same as classifying the $n \times n$ complex Hadamard matrices up to equivalence. The set $ \mathcal{I} $ can be quite general: for $n \in \left\{ 2,3,5 \right\}$ it is a single point; but generally, it may be a union of isolated points and families parameterized by multiple real coordinates. This classification problem is unsolved for $ n \geq 6 $~\cite{dita2004some,tadej2006concise}. Note that it would be good to have a canonical choice for the representative $H_i$ of an arbitrary double coset.
We can always choose $H_i$ to be \textit{dephased}, i.e. with all entries in the first row and first column equal $1$. However, we still have the freedom to reorder the $n-1$ last rows and columns. Hence, there is more than one dephased representative in each double coset -- but there is only a finite number of those.

Recall that $\mathcal{M}_n$ is a homogeneous space, so there is nothing special about the standard basis $e = \mathds{1} \mathrm{C}_n$. For any $q = U \mathrm{C}_n \in \mathcal{M}_n$ 
we have an equivalent decomposition:
\begin{equation}\label{V_decomp}
	\mathcal{V} \left( q \right) = U \cdot \mathcal{V} \left( e \right) = \bigsqcup_{i \in \mathcal{I}} \left( U \mathrm{C}_n U^\dagger \right) \cdot \left( U \cdot q_i \right) ,
\end{equation}
where $ U \cdot q_i = U H_i \mathrm{C}_n $.
Note that the orbits now correspond to the left action of $ U \mathrm{C}_n U^\dagger $ (the stabilizer of $ q $).
Recall a unitary matrix $ V $ represents a $U\mathrm{C}_n$-unbiased basis iff $ H = U^\dagger V $ is a complex Hadamard matrix; thus, $UH$ indeed represents a point in $ \mathcal{V} \left( q \right) $ (where $H$ is an arbitrary Hadamard matrix).

$ \mathcal{V} \left( q \right) $ is a stratified space, with the stratification given by \eqref{V_decomp}.
Each component of \eqref{V_decomp} is an orbit of the compact Lie group $ \left( U \mathrm{C}_n U^\dagger \right) $, acting smoothly on the smooth manifold $ \mathcal{M}_n $. This ensures that every component is an embedded closed submanifold. By the same arguments, the $ \mathrm{U} \left( n \right)_{q_1, \ldots, q_k}$-orbits define a stratification of $ \mathcal{N} \left( q_1, \ldots, q_k \right) $.

We now show that for $k>1$, the $ \mathrm{U} \left( n \right)_{q_1, \ldots, q_k}$-orbits are all discrete.
We start with the following proposition:
\begin{proposition}\label[proposition]{prop:Cn_and_conjugate}
	For any complex Hadamard matrix $H$, the connected component of $ \mathrm{C}_n \cap H \mathrm{C}_n H^\dagger $ that contains the identity is $ Z \left( \mathrm{U} \left( n \right) \right) \cong \mathrm{U} \left( 1 \right) $, i.e. the center of $ \mathrm{U} \left( n \right) $, comprising precisely the scalar unitary matrices.
\end{proposition}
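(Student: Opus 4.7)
The plan is to exploit the fact that $\mathrm{C}_n = \mathrm{S}_n \ltimes \mathrm{T}^n$ has $n!$ connected components (one for each permutation matrix), so the identity component of $\mathrm{C}_n$ is simply the torus $\mathrm{T}^n$ of diagonal unitaries. Conjugation by $H$ is a diffeomorphism, hence the identity component of $H\mathrm{C}_n H^\dagger$ is $H\mathrm{T}^n H^\dagger$. Since for any two closed subgroups $A, B$ of a Lie group the identity component of $A \cap B$ is contained in the intersection of the identity components of $A$ and $B$, the connected component of $\mathrm{C}_n \cap H\mathrm{C}_n H^\dagger$ through the identity lies inside $\mathrm{T}^n \cap H\mathrm{T}^n H^\dagger$. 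So the problem reduces to computing this intersection of tori.

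Next I would show $\mathrm{T}^n \cap H\mathrm{T}^n H^\dagger = Z(\mathrm{U}(n))$. The inclusion $\supseteq$ is immediate because scalar matrices are simultaneously diagonal and central. For $\subseteq$, take $D = \mathrm{diag}(d_1,\ldots,d_n) \in \mathrm{T}^n$ such that $H^\dagger D H$ is also diagonal. This means the columns of $H$ form an orthonormal eigenbasis for $D$. Since $H$ is a complex Hadamard matrix, \emph{every} entry of every column has modulus $1/\sqrt{n}$, in particular no entry is zero. For a diagonal matrix $D$, an eigenvector $v$ with all coordinates nonzero forces all diagonal entries of $D$ to coincide with the corresponding eigenvalue. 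Applying this to any single column of $H$ yields $d_1 = d_2 = \cdots = d_n$, so $D \in Z(\mathrm{U}(n))$.

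To conclude, $Z(\mathrm{U}(n)) \cong \mathrm{U}(1)$ is connected, contains the identity, and is contained in $\mathrm{C}_n \cap H\mathrm{C}_n H^\dagger$ (scalars lie in $\mathrm{C}_n$ and commute with everything). Hence it is contained in the identity component, and combined with the previous paragraph's reverse containment the two coincide.

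The only substantive step is the computation $\mathrm{T}^n \cap H\mathrm{T}^n H^\dagger = Z(\mathrm{U}(n))$, and the key geometric fact making it work is that the Hadamard condition prevents any column of $H$ from having a zero entry; without this, eigenspaces of a non-scalar diagonal matrix could be aligned with the columns of $H$. The rest is bookkeeping about identity components of intersections of Lie groups, which I expect to be routine.
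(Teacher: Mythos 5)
Your proposal is correct, and its overall architecture matches the paper's: both reduce the problem to showing that the identity component of the intersection is $\mathrm{T}^n \cap H\mathrm{T}^n H^\dagger$, and then prove that this torus intersection equals $Z\left(\mathrm{U}\left(n\right)\right)$. (Your handling of the first reduction is in fact slightly more careful: you prove the two containments separately, using connectedness of $\mathrm{U}\left(1\right)$, whereas the paper asserts directly that the identity component of the intersection is the intersection of the identity components.) The difference lies in how the key identity $\mathrm{T}^n \cap H\mathrm{T}^n H^\dagger = Z\left(\mathrm{U}\left(n\right)\right)$ is established. You argue that if $D$ and $H^\dagger D H$ are both diagonal then the columns of $H$ form an eigenbasis of $D$, and an eigenvector of a diagonal matrix with no zero coordinates forces all diagonal entries to coincide; this uses only the nonvanishing of the entries of $H$, so your lemma actually holds for any unitary with a single zero-free column, not just Hadamard matrices. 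The paper instead computes the diagonal entries of $H^\dagger A H$ explicitly, finding $\left[ H^\dagger A H \right]_{ii} = \frac{1}{n}\Tr\left(A\right)$ for every $i$, which uses the full Hadamard condition $\abs{h_{ji}}^2 = 1/n$ but yields the sharper quantitative fact that the conjugated matrix has constant diagonal. Both computations are valid; yours is more elementary and more general, the paper's gives extra structural information that motivates its subsequent remark that Hadamard matrices are the ``strongest obstruction'' to normality of the torus.
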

\begin{proof}
	As we have already noted, $ \mathrm{C}_n $ has $n!$ connected components, and the connected component of the identity is the torus $ \mathrm{T}^n $. Similarly, the connected component of the identity of $ H \mathrm{C}_n H^\dagger $ is the conjugated torus $ H \mathrm{T}^n H^\dagger $. Thus, the connected component of the identity in the intersection $ \mathrm{C}_n \cap H \mathrm{C}_n H^\dagger $ is $ \mathrm{T}^n \cap H \mathrm{T}^n H^\dagger $, which we now show to equal $ Z \left( \mathrm{U} \left( n \right) \right) $.
	
	Note that $ A \in \mathrm{T}^n \cap H \mathrm{T}^n H^\dagger $ iff both $A$ and $ H^\dagger A H $ are diagonal unitary matrices.
	For all $ 1 \leq i \leq n $ and any diagonal $A$, we have:
	\begin{align}
		\left[ H^\dagger A H \right]_{ii} & = \sum_{j=1}^n \left[ H^\dagger \right]_{ij} \left[ AH \right]_{ji} = \sum_{j=1}^n h_{ji}^* a_{jj} h_{ji} = \sum_{j=1}^n \abs{h_{ji}}^2 a_{jj} = \sum_{j=1}^n \frac{1}{n} a_{jj} = \nonumber\\
		& = \frac{1}{n} \Tr \left( A \right) ,
	\end{align}
	where we have used the fact the $H$ is a complex Hadamard matrix, hence $ \abs{h_{ji}}^2 = \frac{1}{n} $ for all $i,j$. Thus, the diagonal entries of $H^\dagger A H$ are all equal, implying that $A$ and $H^\dagger A H$ are both diagonal iff $H^\dagger A H$ is a scalar matrix, which of course implies that $ H^\dagger A H = A $.
\end{proof}
Note the proof has the following corollary:
\begin{equation}
	\mathrm{Core}_{\mathrm{U} \left( n \right)} \left( \mathrm{T}^n \right) \defeq \bigcap_{V \in \mathrm{U} \left( n \right)} V \mathrm{T}^n V^\dagger = Z \left( \mathrm{U} \left( n \right) \right) ,
\end{equation}
where $ \mathrm{Core}_{\mathrm{U} \left( n \right)} \left( \mathrm{T}^n \right) $ is the largest subgroup of $ \mathrm{T}^n $ which is normal in $ \mathrm{U} \left( n \right) $. The intersection $ \bigcap_{V \in \mathrm{U} \left( n \right)} V \mathrm{T}^n V^\dagger $ over \textit{all} unitaries $V$ equals the intersection $ \mathrm{T}^n \cap H \mathrm{T}^n H^\dagger $, for any Hadamard matrix $H$. Intuitively, this indicates that Hadamard matrices are the strongest obstructions for $ \mathrm{T}^n $ being normal (in $ \mathrm{U} \left( n \right) $). We also note that the normalizer of $ \mathrm{T}^n $ is $ \mathrm{C}_n $, the monomial matrices. In contrast, Hadamard matrices are the least monomial unitaries, in the sense that all entries have the same absolute value.

Now, consider a MUB pair $ \left( e, q \right) = \left( \mathds{1} \mathrm{C}_n, H \mathrm{C}_n \right) $ where $H$ is a Hadamard matrix (by \Cref{lemma:MUB_list_standard_form}, any MUB pair is equivalent to one of this form). We would like to show that the $ \mathrm{U} \left( n \right)_{e, q} $-orbits in $ \mathcal{N} \left( e, q \right) $ are discrete. Any such orbit is diffeomorphic to the quotient $ \mathrm{U} \left( n \right)_{e, q} / \mathrm{U} \left( n \right)_{e, q, p} $ for $ p \in \mathcal{N} \left( e, q \right) $.
Recalling $ \mathrm{U} \left( n \right)_{e, q} = \mathrm{C}_n \cap H \mathrm{C}_n H^\dagger $, the above proposition implies that $ \mathrm{U} \left( n \right)_{e, q} / Z \left( \mathrm{U} \left( n \right) \right) $ is a finite group ($ \mathrm{U} \left( n \right)_{e, q} $ has finitely many connected components since the same holds for $ \mathrm{C}_n $ and $ H \mathrm{C}_n H^\dagger $). 
By the discussion above, the center is contained in \textit{every} stabilizer, hence $ \mathrm{U} \left( n \right)_{e, q, p} \supseteq Z \left( \mathrm{U} \left( n \right) \right) $. Therefore, the orbit $ \mathrm{U} \left( n \right)_{e, q} / \mathrm{U} \left( n \right)_{e, q, p} $ is contained in the aforementioned finite group. The same reasoning holds for the $ \mathrm{U} \left( n \right)_{q_1, \ldots, q_k} $-orbits of $ \mathcal{N} \left( q_1, \ldots, q_k \right) $, for any $k > 1$.
\begin{remark*}
	Since the center $ Z \left( \mathrm{U} \left( n \right) \right) $ is the intersection of all stabilizers $ \mathrm{U} \left( n \right)_p $, the $ \mathrm{U} \left( n \right) $-action on $ \mathcal{M}_n $ induces a faithful $ \mathrm{PU} \left( n \right) $-action on $ \mathcal{M}_n $, where $ \mathrm{PU} \left( n \right) \defeq \mathrm{U} \left( n \right) / Z \left( \mathrm{U} \left( n \right) \right) $.
\end{remark*}

\section{Application: new equivalences of MUB triples in dimension $n=4$}\label{sec:app}
In this section, we apply our results in dimension $n=4$. For a fixed MUB pair $ \left( e, f_0 \right) $, we decompose $ \mathcal{N} \left( e, f_0 \right) $ into $ \mathrm{U} \left( n \right)_{e, f_0} $-orbits and discover new equivalences between MUB triples. We begin by rephrasing known results via our geometric perspective, and then provide a detailed explanation of a general method for computing $ \mathrm{U} \left( n \right)_{e, f_0} $ and its orbits.

In dimension $4$, the classification of complex Hadamard matrices is known~\cite{brierley2009all}. Every $ 4 \times 4 $ Hadamard matrix is equivalent to one of the following form:
	\begin{equation}
		F_4 \left( x \right) = \frac{1}{2} \begin{pmatrix}
			1 & 1 & 1 & 1 \\
			1 & 1 & -1 & -1 \\
			1 & -1 & i e^{i x} & -i e^{i x} \\
			1 & -1 & -i e^{i x} & i e^{i x}
		\end{pmatrix} , \quad x \in \left[ 0, \pi \right] .
	\end{equation}
	Let us recast this fact in our terms: the set of $ 4 \times 4 $ complex Hadamard matrices is $ \bigsqcup_{x \in \left[ 0, \pi \right] } \mathrm{C}_4 F_4 \left( x \right) \mathrm{C}_4 $, where $ \mathrm{C}_4 F_4 \left( x \right) \mathrm{C}_4 $ is the double coset of $ F_4 \left( x \right) $ under the two $ \mathrm{C}_4 $-actions (left and right multiplication).
	Equivalently, the subset $ \mathcal{V} \left( e \right) \subseteq \mathcal{M}_4 $ has the following decomposition into orbits of the left $ \mathrm{C}_4 $-action:
	\begin{equation}
		\mathcal{V} \left( e \right) \cong \bigsqcup_{x \in \left[ 0, \pi \right] } \mathrm{C}_4 \cdot f_x , \quad f_x \defeq F_4 \left( x \right) \mathrm{C}_4 ,
	\end{equation}
	so in this case the index set $ \mathcal{I} $ from \eqref{classify_Hadamard} is the interval $ \left[ 0, \pi \right]$, and the representative $ H_x $ is denoted $ F_4 \left( x \right) $.
	
	Now, consider the MUB pair $ \left( e, f_0 \right) $, where $ f_0 = F_4 \left( 0 \right) \mathrm{C}_4 $. From \cite{brierley2009all}, the only dephased unitary matrices which are mutually-unbiased to both $ \mathds{1} $ and $ F_4 \left( 0 \right) $ are of the form:
	\begin{equation}\label{H4_yz}
		H_4 \left( y, z \right) = \frac{1}{2} \begin{pmatrix}
			1 & 1 & 1 & 1 \\
			1 & 1 & -1 & -1 \\
			-e^{i y} & e^{i y} & e^{i z} & -e^{i z} \\
			e^{i y} & -e^{i y} & e^{i z} & -e^{i z}
		\end{pmatrix} , \quad y, z \in \left[ 0, \pi \right) .
	\end{equation}
	For each $y, z$ we obtain a distinct coset $ h_{y,z} \defeq H_4 \mathrm{C}_4 \left( y, z \right) $, hence $ \mathcal{N} \left( e, f_0 \right) = \left\{ h_{y,z} \mid y, z \in \left[ 0, \pi \right) \right\} $. We would like to know which (if any) of the MUB triples $ \left( e, f_0, h_{y,z} \right) $ are equivalent. By \Cref{thm:main}, $ h_{y, z} $ and $ h_{y', z'} $ yield equivalent MUB triples iff they lie in the same orbit of the group $ \tilde{G} \defeq \mathrm{C}_4 \cap F_4 \left( 0 \right) \mathrm{C}_4 F_4 \left( 0 \right)^\dagger $. 
	By \Cref{prop:Cn_and_conjugate} we know that $ G \defeq \tilde{G} / Z \left( \mathrm{U} \left( n \right) \right) $ is a finite group, and by the above remark, it suffices to characterize $ \mathcal{N} \left( e, f_0 \right) $ up to $G$-orbits. We compute $G$ explicitly via the following procedure. For $ \rho \in \mathrm{S}_4 $, let $ \mathrm{C}_4^\rho $ denote the corresponding connected component of $ \mathrm{C}_4 $, i.e.,
\begin{equation}\label{C4_conn_comp}
		\mathrm{C}_4^\rho \defeq \left\{ U \in \mathrm{U} \left( 4 \right) \mid U_{ij} = 0 \quad \forall i, j \; \mathrm{s.t.} \; j \neq \rho \left( i \right) \right\} .
\end{equation}
If $ U \in \tilde{G} $, there exist $ \rho, \sigma \in \mathrm{S}_4 $ such that $ U \in \mathrm{C}_4^\rho $ and $ F_4 \left( 0 \right)^\dagger U F_4 \left( 0 \right) \in \mathrm{C}_4^\sigma $. Let $ u_i \defeq U_{i, \rho \left( i \right)} $ denote the nonzero entries of $U$. Since $ F_4 \left( 0 \right)^\dagger U F_4 \left( 0 \right) \in \mathrm{C}_4^\sigma $, we obtain from \eqref{C4_conn_comp} the following homogeneous linear equations for $ u_i $:
\begin{align}
		& \left[ F_4 \left( 0 \right)^\dagger U F_4 \left( 0 \right) \right]_{kl} = 0 && \forall k, l \; \mathrm{s.t.} \; l \neq \sigma \left( k \right) \nonumber\\
		\Leftrightarrow \quad & \sum_{m=1}^{4} \left[ F_4 \left( 0 \right)^\dagger \right]_{j, \rho^{-1} \left( m \right)} \left[  F_4 \left( 0 \right) \right]_{m,l} u_{\rho^{-1} \left( m \right)} = 0 && \forall k, l \; \mathrm{s.t.} \; l \neq \sigma \left( k \right) .
\end{align}
It is straightforward to go over all pairs $ \rho, \sigma \in \mathrm{S}_4 $ and seek all nontrivial solutions to this system of equations. The solution space turns out either trivial or one-dimensional; and in one-dimensional cases the $u_i$ always have equal modulus, and by normalizing we obtain a unique element of $G$.
It turns out that only $8$ out of the $24$ permutations participate in nontrivial intersections; out of those $8$ permutations, each one intersects $4$ (including itself); thus, $\abs{G} = 32$. Moreover, $G$ is given by a semidirect product $ G \cong B \ltimes \mathbb{Z}_4 $, where $ B \subseteq \mathrm{S}_4 $ is given by:
\begin{equation}
	B \defeq \left\{ \mathrm{id}, \left( 12 \right), \left( 3 4 \right), \left( 1 2 \right) \left( 3 4 \right) , \left( 1 3 \right) \left( 2 4 \right) , \left( 1 3 2 4 \right) , \left( 1 4 2 3 \right) , \left( 1 4 \right) \left( 2 3 \right) \right\} ,
\end{equation}
and is generated by $\left( 1 2 \right)$, $ \left( 3 4 \right) $ and $ \left( 1 3 \right) \left( 2 4 \right) $. $ \mathbb{Z}_4 $ is realized by diagonal matrices $ D_a \in \mathrm{T}^4 $. The diagonal entries of each $D_a$ are given by another row of $ F_4 \left( 0 \right) $:
\begin{equation}
	D_0 = \mathds{1} , \quad D_1 = \mathrm{diag} \begin{pmatrix}
		1 \\ -1 \\ i \\ -i
	\end{pmatrix} , \quad  D_2 = \mathrm{diag} \begin{pmatrix}
		1 \\ 1 \\ -1 \\ -1
	\end{pmatrix} , \quad
	D_3 = \mathrm{diag} \begin{pmatrix}
	1 \\ -1 \\ -i \\ i
\end{pmatrix} .
\end{equation}
One easily verifies that $ D_a D_b = D_{a+b \, \mathrm{mod} \, 4} $. Letting $ R_\pi $ denote the permutation matrix corresponding to $\pi \in B $, we can provide a complete description of $G$:
all products $ R_\pi D_a $, up to a scalar matrix $ e^{i \theta} \mathds{1} \in \mathrm{Z} \left( \mathrm{U} \left( n \right) \right) $. Equivalently, the matrices $ R_\pi D_a $ form a \textit{projective representation} of $G$; for example, $ D_1 R_{\left( 13 \right) \left( 24 \right)} D_1 = i R_{\left( 13 \right) \left( 24 \right)} $ (we shall use this fact later).

Now we can compute all the left actions $ R_\pi D_a \cdot h_{y,z} $, and test whether any two distinct $ h_{y,z} $ are in the same $G$-orbit. Starting with $\mathbb{Z}_4$, we find:
\begin{equation}
	D_1 H_4 \left( y, z \right) = \frac{1}{2} \begin{pmatrix}
		1 & 1 & 1 & 1 \\
		-1 & -1 & 1 & 1 \\
		-i e^{i y} & i e^{i y} & i e^{i z} & -i e^{i z} \\
		-i e^{i y} & i e^{i y} & -i e^{i z} & i e^{i z}
	\end{pmatrix} .
\end{equation}
This matrix is not in the form of the canonical representative \eqref{H4_yz}, but we can bring it to this form using the \textit{right} $ \mathrm{C}_n $-action; and since the first row is already dephased, we need only consider column permutations. To get the second row in the correct form we must swap the first two columns with the second two, but how do we know which way to do it? Note that $ \left[ H_4 \left( y, z \right) \right]_{32} $ is either $1$, or has strictly positive imaginary part. If $ 0 \leq z < \frac{\pi}{2} $, then $ i e^{i z} = e^{i \left( z + \frac{\pi}{2} \right)} $ obeys this condition; otherwise, $ -i e^{i z} = e^{i \left( z - \frac{\pi}{2} \right)} $ obeys this condition. An analogous statement holds for $ \pm i e^{i y} $, based on the range of $y$. Denote:
\begin{equation}
	y' \defeq \begin{cases}
		z + \frac{\pi}{2} ; & 0 \leq z < \frac{\pi}{2} \\
		z - \frac{\pi}{2} ; & \frac{\pi}{2} < z < \pi
	\end{cases} , \quad z' \defeq \begin{cases}
	y + \frac{\pi}{2} ; & 0 \leq y < \frac{\pi}{2} \\
	y - \frac{\pi}{2} ; & \frac{\pi}{2} < y < \pi
\end{cases} .
\end{equation}
By choosing a column permutation suitable to the ranges of $y,z$, we obtain a matrix $ H_4 \left( y', z' \right) $. Thus, we have found $ D_1 \cdot h_{y,z} = h_{z + \frac{\pi}{2} \, \mathrm{mod} \, \pi, \; y + \frac{\pi}{2} \, \mathrm{mod} \, \pi} $.
Moving forward, we find:
\begin{equation}
	D_2 H_4 \left( y, z \right) = \frac{1}{2} \begin{pmatrix}
		1 & 1 & 1 & 1 \\
		1 & 1 & -1 & -1 \\
		e^{i y} & -e^{i y} & -e^{i z} & e^{i z} \\
		-e^{i y} & e^{i y} & -e^{i z} & e^{i z}
	\end{pmatrix} \xrightarrow{c_1 \leftrightarrow c_2, c_3 \leftrightarrow c_4} H_4 \left( y, z \right) ,
\end{equation}
by which we mean that after swapping the first column with second and the third column with the fourth, we obtain the matrix $ H_4 \left( y, z \right) $ again. Thus, $ D_2 $ stabilizes $ H_4 \left( y, z \right) $ for all $y,z$; hence $ D_2 \in \mathrm{U} \left( n \right)_{e, f_0, h_{y,z}} $. Since $ D_3 = D_1 D_2 $, we already know that the action of $D_3$ would be the same as that of $D_1$.

Now we can proceed to examine the action of $B$ on $ \mathcal{N} \left( e, f_0 \right) $. We have:
\begin{equation}
	R_{\left( 1 2 \right)} H_4 \left( y, z \right) = \frac{1}{2} \begin{pmatrix}
		1 & 1 & -1 & -1 \\
		1 & 1 & 1 & 1 \\
		-e^{i y} & e^{i y} & e^{i z} & -e^{i z} \\
		e^{i y} & -e^{i y} & e^{i z} & -e^{i z}
	\end{pmatrix} \xrightarrow{\mathrm{dephasing}} \frac{1}{2} \begin{pmatrix}
	1 & 1 & 1 & 1 \\
	1 & 1 & -1 & -1 \\
	-e^{i y} & e^{i y} & -e^{i z} & e^{i z} \\
	e^{i y} & -e^{i y} & -e^{i z} & e^{i z}
\end{pmatrix} ,
\end{equation}
where the dephasing step multiplies the two rightmost columns by $-1$. By swapping the third and fourth columns of the matrix we obtained above, we get $ H_4 \left( y, z \right) $; hence, $ R_{\left( 1 2 \right)} $ is in the stabilizer. We can similarly verify that $ R_{\left( 3 4 \right)} $ belongs to the stabilizer as well. Finally, we should examine the action of $ R_{\left( 1 3 \right) \left( 2 4 \right)} $; for the sake of convenience, we first consider $ R_{\left( 1 3 \right) \left( 2 4 \right)} D_1 $:
\begin{equation}
	R_{\left( 1 3 \right) \left( 2 4 \right)} D_1 H_4 \left( y, z \right)  \xrightarrow{\mathrm{dephasing}} \frac{1}{2} \begin{pmatrix}
	1 & 1 & 1 & 1 \\
	1 & 1 & -1 & -1 \\
	i e^{-i y} & -i e^{-i y} & -i e^{-i z} & i e^{-i z} \\
	-i e^{-i y} & i e^{-i y} & -i e^{-i z} & i e^{-i z}
\end{pmatrix} .
\end{equation}
Recall we want the entry in the third row and first column to be either $-1$ or have negative imaginary part, and note that $ i e^{-i y} = e^{i \left( \frac{\pi}{2} -y \right)} = -e^{i \left( \frac{3\pi}{2} -y \right)} $. Thus, if $ \frac{\pi}{2} < y < \pi $ this condition is satisfied, and we can define $ y' = \frac{3\pi}{2} -y $; otherwise we should swap the first two columns and define $ y' = \frac{\pi}{2} -y $. Applying the same logic for $z$, we find that $ R_{\left( 1 3 \right) \left( 2 4 \right)} D_1 \cdot h_{y,z} = h_{ \left( \frac{3\pi}{2} -y \right) \mathrm{mod} \, \pi, \; \left( \frac{3\pi}{2} -z \right)   \mathrm{mod} \, \pi } $.

To summarize, we computed the $ \mathrm{U} \left( n \right)_{e, f_0} $-action on $h_{y,z} \in \mathcal{N} \left( e, f_0 \right) $, which is equivalent to the $G$-action (by the remark at the end of \Cref{sec:orbit_structure}). For each $y,z$, we found that the stabilizer $G_{h_{y,z}}$ of $h_{y,z}$ is generated by $D_2$, $R_{\left( 1 2 \right)}$ and $ R_{\left( 3 4 \right)} $. These are commuting elements of order two, hence $ G_{h_{y,z}} \cong \mathbb{Z}_2 \times \mathbb{Z}_2 \times \mathbb{Z}_2 $. The orbit of $h_{y,z}$ under $G$ is diffeomorphic to $ G / G_{h_{y,z}} $; thus, it has four elements:
\begin{equation}
	\begin{split}
		\mathds{1} \cdot h_{y,z} & = h_{y,z}, \\ 
		D_1 \cdot h_{y,z} & = h_{ \left( z + \frac{\pi}{2} \right) \mathrm{mod} \, \pi, \; \left( y + \frac{\pi}{2} \right) \mathrm{mod} \, \pi }, \\
		R_{\left( 1 3 \right) \left( 2 4 \right)} D_1 \cdot h_{y,z} & = h_{ \left( \frac{3\pi}{2} -y \right) \mathrm{mod} \, \pi, \; \left( \frac{3\pi}{2} -z \right) \mathrm{mod} \, \pi }, \\ 
		R_{\left( 1 3 \right) \left( 2 4 \right)} \cdot h_{y,z} & = h_{\left( \pi-z \right) \mathrm{mod} \, \pi , \; \left( \pi-y \right) \mathrm{mod} \, \pi } .
	\end{split}
\end{equation}
To compute $ R_{\left( 1 3 \right) \left( 2 4 \right)} \cdot h_{y,z} $ we used the fact $ D_1 R_{\left( 13 \right) \left( 24 \right)} D_1 = i R_{\left( 13 \right) \left( 24 \right)} $, which implies that $ R_{\left( 1 3 \right) \left( 2 4 \right)} \cdot h_{y,z} = D_1 \cdot \left( R_{\left( 1 3 \right) \left( 2 4 \right)} D_1 \cdot h_{y,z} \right) $.

Thus, we have established the following:
\begin{proposition}\label[proposition]{prop:MUB_triples_in_dim_4s}
	The MUB triples $ \left( e, f_0, h_{y,z} \right) $ and $ \left( e, f_0, h_{y', z'} \right) $ are equivalent (in the sense of \Cref{def:equiv_MUB_lists}) if and only if one of the following holds:
	\begin{itemize}
		\item $y' = y$ and $ z' = z$;
		\item $ y' = \left( z + \frac{\pi}{2} \right) \mathrm{mod} \, \pi $ and $ z' = \left( y + \frac{\pi}{2} \right) \mathrm{mod} \, \pi $;
		\item $ y' = \left( \frac{3\pi}{2} -y \right) \mathrm{mod} \, \pi $ and $ z' = \left( \frac{3\pi}{2} -z \right) \mathrm{mod} \, \pi $;
		\item $ y' = \left( \pi - z \right) \mathrm{mod} \, \pi $ and $ z' = \left( \pi -  y \right) \mathrm{mod} \, \pi $.
	\end{itemize}
\end{proposition}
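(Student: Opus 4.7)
The plan is to apply \Cref{thm:main} directly: the triples $(e, f_0, h_{y,z})$ and $(e, f_0, h_{y',z'})$ are equivalent if and only if $h_{y,z}$ and $h_{y',z'}$ lie in the same orbit of the simultaneous stabilizer $\mathrm{U}(n)_{e, f_0} = \mathrm{C}_4 \cap F_4(0) \mathrm{C}_4 F_4(0)^\dagger$. By the remark closing \Cref{sec:orbit_structure}, the center $Z(\mathrm{U}(n))$ acts trivially on $\mathcal{M}_n$, so it suffices to work with the finite quotient $G = \mathrm{U}(n)_{e, f_0} / Z(\mathrm{U}(n))$ guaranteed by \Cref{prop:Cn_and_conjugate}. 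Thus the proposition reduces to computing $G$ and its orbits on $\mathcal{N}(e, f_0)$.

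First I would compute $G$ explicitly. Any $U \in \mathrm{U}(n)_{e, f_0}$ lies in some connected component $\mathrm{C}_4^\rho$ and satisfies $F_4(0)^\dagger U F_4(0) \in \mathrm{C}_4^\sigma$ for some $\rho, \sigma \in \mathrm{S}_4$; writing $u_i = U_{i, \rho(i)}$ yields a homogeneous linear system in the $u_i$, one equation per off-support entry of $F_4(0)^\dagger U F_4(0)$. Enumerating all $24 \times 24$ pairs $(\rho, \sigma)$ is a finite mechanical check: each system is either inconsistent or one-dimensional with a solution where all $u_i$ share a common modulus, giving a unique $G$-element after normalization. This produces $|G| = 32$ and identifies the structure $G \cong B \ltimes \mathbb{Z}_4$ with $B \subseteq \mathrm{S}_4$ as stated, realized by the matrices $R_\pi D_a$ (checking the relation $D_a D_b = D_{a+b}$ and the twisted conjugation $D_1 R_{(13)(24)} D_1 = i R_{(13)(24)}$ along the way).

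Next I would compute the orbit of a generic point $h_{y,z}$. For each generator $R_\pi D_a$, I form the product $R_\pi D_a H_4(y,z)$, then restore canonical form \eqref{H4_yz} by right-multiplying with an element of $\mathrm{C}_4$ (dephasing columns and permuting them so that the first two columns start with $1,1$, the last two start with $1,-1$, and the $(3,1)$ entry is either $-1$ or has strictly negative imaginary part). The sign/phase choice in the permutation is dictated by whether $y$ or $z$ lies in $[0, \pi/2)$ or $(\pi/2, \pi)$, which is what produces the ``$\mathrm{mod}\, \pi$'' shifts. A direct verification shows $D_2, R_{(12)}, R_{(34)}$ each stabilize $h_{y,z}$ for all $(y,z)$, so these three commuting involutions generate a subgroup of $G_{h_{y,z}}$ of order at least $8$; since the remaining generators $D_1$ and $R_{(13)(24)}$ act nontrivially on generic $(y,z)$, we get $G_{h_{y,z}} \cong \mathbb{Z}_2^3$ exactly, and the orbit $G/G_{h_{y,z}}$ has size $32/8 = 4$.

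Finally I would list the four orbit representatives, computed as $\mathds{1}, D_1, R_{(13)(24)} D_1$, and $R_{(13)(24)}$ acting on $h_{y,z}$ (the last using $R_{(13)(24)} = D_1 \cdot R_{(13)(24)} D_1 \cdot D_1^{-1}$ up to center, via the twisted conjugation identity), yielding the four pairs $(y',z')$ in the statement. The main obstacle is not conceptual but bookkeeping: the enumeration of $(\rho,\sigma)$ pairs for computing $G$, and the careful case analysis of which column swap returns the transformed matrix to canonical form, since the swap choice depends on which sub-interval of $[0,\pi)$ contains $y$ or $z$. Once this is handled consistently via the $\mathrm{mod}\, \pi$ arithmetic, the four equivalences follow immediately from the orbit computation.
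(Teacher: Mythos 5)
Your proposal is correct and follows essentially the same route as the paper: reduce via \Cref{thm:main} and \Cref{prop:Cn_and_conjugate} to computing the orbits of the finite group $G \cong B \ltimes \mathbb{Z}_4$ of order $32$ on $\mathcal{N}\left(e,f_0\right)$, identify the stabilizer $G_{h_{y,z}} \cong \mathbb{Z}_2 \times \mathbb{Z}_2 \times \mathbb{Z}_2$ generated by $D_2$, $R_{(12)}$, $R_{(34)}$, and read off the resulting four-element orbit. The only quibble is your parenthetical identity $R_{(13)(24)} = D_1 \cdot R_{(13)(24)} D_1 \cdot D_1^{-1}$ ``up to center,'' which should instead be $R_{(13)(24)} = -i\, D_1 R_{(13)(24)} D_1$ (so that $R_{(13)(24)} \cdot h_{y,z} = D_1 \cdot \left( R_{(13)(24)} D_1 \cdot h_{y,z} \right)$ because the central scalar acts trivially on $\mathcal{M}_n$); this slip does not affect the argument.
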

\begin{proof}
	The proposition follows from \Cref{thm:main}, combined with our computation of the $ \mathrm{U} \left( n \right)_{e, f_0} $-orbit of $ h_{y,z} $.
\end{proof}
These equivalences were hitherto unknown (to our knowledge), and reduce the parameter space by a factor of $4$.

Note that the method we used in this section can be executed for any value of $n$. Moreover, we can classify MUB quadruples, quintuples etc. In fact, in the process of classifying MUB triples we already computed the simultaneous stabilizer $ U_{e, f_0, h_{y,z}} $ of a triple, which shall be required to classify quadruples. This is a general feature: when finding the $\mathcal{U} \left( n \right)_{q_1, \ldots, q_{k-1}}$-orbit of a point $ p \in \mathcal{N} \left( q_1, \ldots, q_{k-1} \right) $, we compute the stabilizer $ \mathcal{U} \left( n \right)_{q_1, \ldots, q_{k-1}, p} $.

\section{Conclusion}\label{sec:conclusion}
In this paper, we considered the task of constructing an ordered list of mutually unbiased bases. We introduced a geometric formulation of this task. To do so, we defined a manifold endowed with a metric and an isometric action of the unitary group. The points of the manifold correspond to orthonormal bases, and the geometric structure captures the notions of mutual-unbiasedness and equivalence.
We outlined a procedure for constructing MUB lists, where on the $k$th step one chooses a new basis, unbiased to all previously-chosen bases. We proved a theorem that characterizes in geometric terms choices that yield equivalent MUB lists.

Our results shed new light on the connection between mutually unbiased bases and complex Hadamard matrices. Our theorem explains the significance of classifying Hadamard matrices to the study of MUBs, and demonstrates that this classification is a special case of a more general decomposition. 
Moreover, we can reformulate the existence problem of mutually unbiased bases in geometric terms: a set of $m$ MUBs exists in dimension $n$, iff there exist points $q_1, \ldots, q_m \in \mathcal{M}_n$ such that $\forall i, \; q_i \in \mathcal{N} \left( q_1, \ldots, q_{i-1} \right)$.
Our geometric perspective clarifies existing classifications of mutually unbiased bases, and also generates new ones. It may help to (i) shrink parameter spaces dramatically, (ii) make full numerical MUB searches feasible, and (iii) unify scattered Hadamard families.

There are several promising directions in which our results could be extended. Recall that we have only considered (ordered) MUB lists; however, for many applications of MUBs, two MUB lists with the same elements appearing in a different order would be considered equivalent. Thus, future work may seek to generalize our results and characterize (unordered) MUB sets up to equivalence.

Our results may also open up the possibility of applying additional mathematical tools to the existence problem of MUBs. Indeed, the spaces we have defined are endowed with a rich geometric structure, of which we only utilized very little.
Smooth manifolds with compact Lie group actions have a rich theory, which may allow for a systematic study of the stratification of $ \mathcal{N} \left( q_1, \ldots, q_{i-1} \right) $. In particular, the geometric formulation may assist with the classification of Hadamard matrices in dimension $n=6$, where it is still open.
We also note that $ \mathcal{M}_n $ is closely related to the complete flag manifold $ \tilde{\mathcal{M}}_n $, which has important applications in representation theory. By the Borel-Weil theorem, each irreducible representation of $ \mathrm{U} \left( n \right) $ is given as the space of global sections of a certain holomorphic line bundle on $ \tilde{\mathcal{M}}_n $. In contrast, our $ \mathcal{M}_n $ is naturally equipped with a \textit{real} vector bundle of rank $n-1$. The cohomology groups (actually real vector spaces) of this vector bundle furnish representations of $ \mathrm{U} \left( n \right) $. There may be an unexplored connection between these representations, and those realized on the line bundles on $ \mathrm{U} \left( n \right) $.
Moreover, $ \tilde{\mathcal{M}}_n $ is a K\"{a}hler manifold, while $ \mathcal{M}_n $ inherits a Riemannian metric via its embedding in the real Grassmannian $ \mathrm{Gr} \left( n-1, \mathfrak{su} \left( n \right) \right) $. Curiously, the chordal distance on $ \mathcal{M}_n $ (also inherited from the Grassmannian) seems to be more directly pertinent to the problem of MUBs, compared to the richer geometric structures of the vector bundle and Riemannian metric. Future work may wish to seek connections between all of these structures and MUBs.
Finally, note the decomposition \eqref{double_cosets} of $ \mathcal{V} \left( q \right) $ into double cosets holds a striking resemblance to the \textit{Bruhat decomposition} of $ \mathcal{M}_n $.
These facts may hint towards the possibility of applying further representation-theoretic tools to study the geometric problems presented in this paper.

\backmatter

\bmhead{Acknowledgements}
We are grateful to Leonid Polterovich, Boris Kunyavskii, Joseph Bernstein and Mikhail Katz for helpful discussions.

\section*{Declarations}
\bmhead{Funding}
This work was partially supported by the European Union's Horizon Europe research and innovation programme under grant agreement No. 101178170 and by the Israel Science Foundation under grant agreement No. 2208/24.

\bmhead{Competing interests}
The authors have no relevant financial or non-financial interests to disclose.

\bmhead{Ethics approval}
Not applicable.

\bmhead{Consent for publication}
Not applicable.

\bmhead{Data availability}
No datasets were generated or analysed during the current study.

\bmhead{Materials availability}
Not applicable.

\bmhead{Code availability}
The code used in this study is publicly available at \url{https://github.com/smitke6/Equivalent-MUB-lists}.

\bmhead{Author contributions}
Both authors contributed to conceiving the project. The proofs and computations were performed by Amit Te'eni. The manuscript was written by Amit Te'eni with comments and edits provided by Eliahu Cohen.

\begin{appendices}




\end{appendices}



\end{document}